\tikzstyle{vertex}=[auto=left,circle,fill=black!25,minimum size=20pt,inner sep=0pt]
\newcommand{\ra}[1]{\renewcommand{\arraystretch}{#1}}
\newtheorem{thm}{Theorem}
\newtheorem{cor}{Corollary}
\newtheorem{lem}{Lemma}
\theoremstyle{remark}
\theoremstyle{definition}
 \newenvironment{definition}[1][Definition]{\begin{trivlist}
 \item[\hskip \labelsep {\bfseries #1}]}{\end{trivlist}}
\newcommand \listoftodos{\chapter{Todo list} \@starttoc{todocontents}}
 \newcommand\l@todo[2]
\newcommand{\etal}{\textit{et al.}~}
\acrodef{RV}{random variable}
\acrodef{i.i.d.}{independent, identically distributed}
\acrodef{PDF}{probability distribution function}
\acrodef{PMF}{probability mass function}
\acrodef{CDF}{cumulative distribution function}
\acrodef{HF}{high frequency}
\acrodef{ch.f.}{characteristic function}
\acrodef{NPV}{net present value}
\acrodef{PV}{present value}
\acrodef{CF}{cash flow}
\acrodef{DCF}{discounted cash flow}
\acrodef{WACC}{weighted average cost of capital}
\acrodef{CDN}{content distribution network}
\acrodef{NC}{network coding}
\acrodef{DC}{data center}
\acrodef{PUE}{power usage effectiveness}
\acrodef{RLC}{random linear coding}
\acrodef{ACPI}{advanced configuration and power interface}
\acrodef{RLNC}{random linear network coding}
\acrodef{SAN}{storage area network}
\acrodef{NAS}{network attached storage}
\acrodef{NCS}{network coded storage}
\acrodef{AWS}{Amazon Web Services}
\acrodef{HD}{high definition}
\acrodef{HLS}{HTTP Live Streaming}
\acrodef{NCC}{no coefficient-cycling}
\acrodef{CC}{coefficient-cycling}
\acrodef{HDD}{hard disk drive}
\acrodef{SSD}{solid state drive}
\acrodef{IDNC}{instantly decodable network coding}
\acrodef{SFM}{state feedback matrix}
\acrodef{DSM}{drive state matrix}
\acrodef{MDS}{maximum distance separable}
\acrodef{d.o.f.}{degree of freedom}
\acrodef{PMP}{point-to-multipoint}
\acrodef{QSS}{queueing-based storage switch}
\acrodef{SSP}{stable-set polytope}
\acrodef{QCN}{queued cross-bar network}
\acrodef{RR}{rate region}
\author{Ulric J.~Ferner, Neda Aboutorab, Parastoo Sadeghi, Muriel M\'{e}dard \thanks{This material is based upon
    work supported by 
the Martin Family Society of Fellows for Sustainability at MIT, by BAE Systems National Security
    Solutions Inc., under award 739532-SLIN 0004, and the Australian Research Councils Discovery Projects funding scheme (project no. DP120100160). U.J.~Ferner and M.~M\'{e}dard
    are currently with the Research Laboratory for Electronics,  Massachusetts Institute of Technology, Room 36-512, 77 Massachusetts Avenue, Cambridge,
    MA 02139 (e-mail: \{uferner, medard\}@mit.edu).  P.~Sadaghi and N.~Aboutorab are with the Research School of Information Sciences and Engineering, The Australian National University, Canberra, Australia (e-mail:\{parastoo.sadeghi, neda.aboutorab\}@anu.edu.au). }  }
\title{Queued cross-bar network models \\ for replication and coded storage systems}
\date{\today}
\begin{document}
\maketitle
\begin{abstract}

Coding techniques may be useful for data center data survivability as well as for reducing traffic
congestion.   We
present a \ac{QCN} method that can be used for traffic analysis of both replication/uncoded and
coded storage systems.   We develop a framework for generating \ac{QCN} \acp{RR} by analyzing their conflict graph stable set
polytopes (SSPs).  In doing so, we apply recent results from graph theory on the characterization of
particular graph SSPs.   We characterize the SSP  of \ac{QCN} conflict graphs under a variety of
  traffic patterns, allowing for their efficient \ac{RR} computation. For uncoded systems, we show
  how to compute \acp{RR} and find rate optimal scheduling algorithms.  For coded storage, we
  develop a \ac{RR} upper bound, for which we provide an intuitive interpretation.  We show that the coded storage \ac{RR} upper bound is
achievable in certain coded systems in which drives store sufficient coded information, as well in certain dynamic coding
systems.   Numerical illustrations show that coded storage can result in gains in \ac{RR} volume of
approximately 50\%, averaged across traffic patterns.  
\end{abstract}

\vspace{-5mm}
\section{Introduction}
\label{sec:introduction}

The continued growth in \ac{DC} demand worldwide is driving the development of new \ac{DC}
architectures and data management techniques.  Two key parameters of data management in
\acp{DC} are the survivability of data in the event of node failures and
constant availability of data.  Significant academic literature has
focused on data survivability in the form of regenerating codes.  See \cite{DimRamWuSuh:11} and
references therein.  In enterprise-level \acp{DC} temporary drive unavailability dominates permanent failure by a factor of
nine  to one \cite{ForLabPopStoTruBarGriQui:10}.   In this paper we concentrate on data unavailability
due to traffic congestion.  

We consider physical storage networks as illustrated in Fig.~\ref{fig:physicalNetEx1}.  Chunks are fixed-size file
subsets.  A network of drives, where each drive is either a
\ac{HDD}, \ac{SSD}, or RAM cache, stores some number of file chunks.   Outside users send read requests for 
file chunks to the drive network, and drives process read requests and send
chunks back to users.  Time is slotted and in each timeslot the network is constrained as to how
many users each drive can transmit a stored chunk to; we refer to these constraints as traffic patterns.

\begin{figure}[t]
  \centering
  \includegraphics[width=\linewidth]{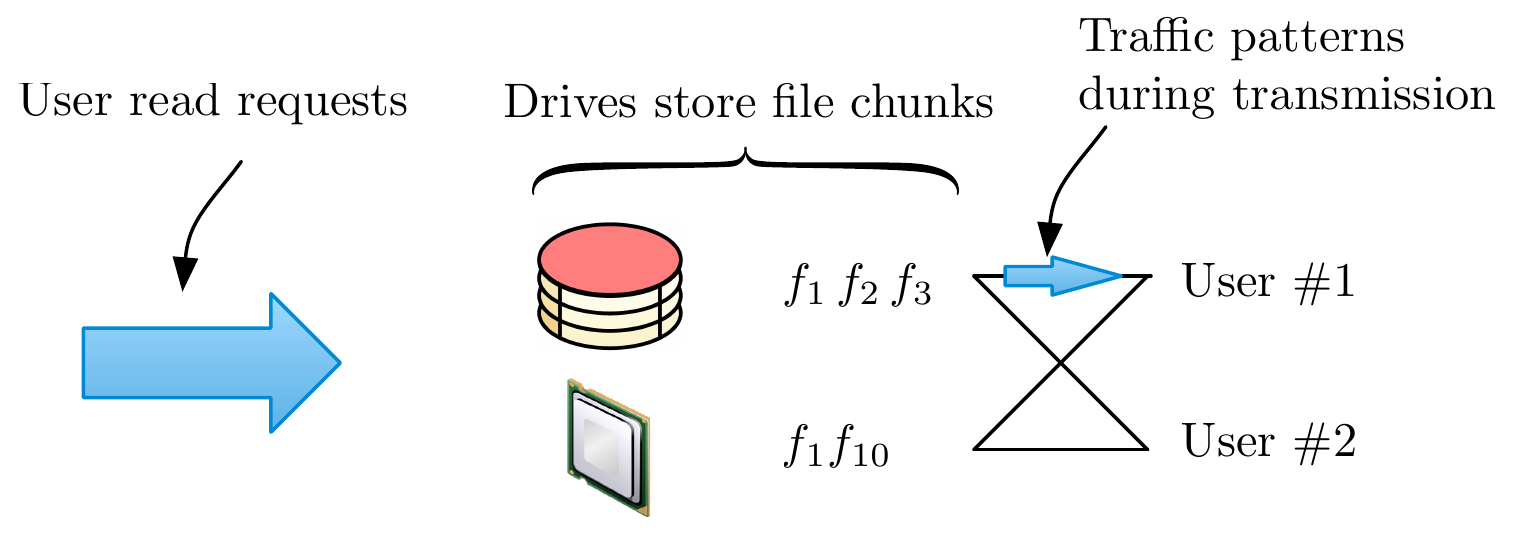}
  \caption{Illustration of the physical networks we consider in this paper.  User requests arrive
    at the drive network.  Drives, of various storage technologies, store some set of file chunks.
    Time is slotted and during each timeslot, traffic pattern constraints govern how many users each
    drive can transmit a stored chunk to.}
  \label{fig:physicalNetEx1}
\end{figure}

To the author's knowledge, the following key questions are unaddressed when designing and
implementing high-traffic storage networks:  What is the maximum achievable rate of a storage network with
general traffic patterns and arbitrary chunk-to-drive mappings?  There do not exist systematic
methods of mapping physical networks to queueing models.   What is the impact of
coded storage on the maximum achievable rate of a storage network?  Existing queueing work on
coded storage either assumes perfect scheduling or uses scheduling heuristics.  Further, what
scheduling algorithms achieve maximum rate for coded storage?  

Referring to Fig.~\ref{fig:method}, the main contributions of
this paper are as follows.
\begin{itemize}
      \item We introduce the \textit{\acf{QCN}} method, which is a technique
  to model relatively general physical drive
  networks as queueing networks.  Our technique allows for arbitrary traffic
  patterns as well as chunk-to-drive mappings.  
      \item To determine the effect of drive traffic pattern restrictions, we develop a framework
  for analyzing \ac{QCN} \acp{RR} by considering their
  conflict graph stable set polytopes (SSPs).  In
  doing so, we use and adapt existing techniques from cross-bar switching literature.
      \item We exactly characterize the SSP  of \ac{QCN} conflict graphs under a variety of
  traffic patterns, allowing for their efficient \ac{RR} computation.
      \item For uncoded storage, we characterize \acp{RR}, and prove
that the existing scheduling algorithm of Tassiulas \etal \cite{TasEph:92} can be modified to be
rate optimal in this application.  For coded storage, we develop a \ac{RR} upper bound, for which we
provide an intuitive interpretation.  We show that the coded storage \ac{RR} upper bound is
achievable in certain coded systems in which drives store sufficient coded information, as well in certain dynamic coding
systems.  
      \item We present numerical illustrations that show potential increases in \ac{RR} volume from
  coded storage,  averaging 50\% across traffic patterns.
\end{itemize}

This paper builds upon and complements existing work in coded storage.  General
scheduling for coded storage in point-to-point networks, when users are served sequentially instead 
of simultaneously, and with particular file layouts are considered in \cite{ShaLeeRam:12,HuaPawHasRam:12}.  
Server scheduling is also well studied in matched networks such as cross-bar switches.  Throughput-optimal
schedules are considered for $ N\times M$ point-to-point cross-bar switches using graph theory and
techniques such as the Birkhoff-von Neumann theorem \cite{AndOwiSaxTha:93,McKMekAnaWal:99,CarRosGoeTar:04}.  
Switches with multicast and broadcast capabilities with
a queueing analysis flavor are considered in \cite{MarBiaGiaLeoNer:03}.  References \cite{PraMcKAhu:97,YuReuBer:11} attempt to map the
multicast problem in cross-bar switches to simpler problems such as block-packing
games and round-robin based multicast.   Chunk scheduling problems in uncoded peer-to-peer
networks, as opposed to \acp{PMP}, are considered in \cite{FenLi:11}, and for star-based broadcast
networks in \cite{RouSiv:97}.   This manuscript differs from these works in
that we consider general traffic patterns and arbitrary chunk-to-drive mappings.  We
also consider scheduling for  coded storage.   

Scheduling for network coded multicast in multihop wireless networks, using a conflict graph approach, is considered in
\cite{TraHeiMedKoe:12}.  Reference \cite{KimSunMedEryKot:11} developed optimal scheduling algorithms for cross-bar switches
with network coding in communications.  Enhanced conflict graphs, closely related to classical conflict graphs, were
developed to allow analysis of cross-bar switches with network coding.  The stable sets of these
conflict graphs were then characterized exactly by showing that in certain cases they are 
perfect graphs.  Although in general characterizing stable set polytopes is \textit{NP}-hard, if the
conflict graph is claw-free, then it can be done in polynomial time
\cite{Wes:B01,KimSunMedEryKot:11}.  We use the fact that certain traffic patterns produce graphs
whose 
\vspace{-5mm}
\subsection{Summary of Main Results}
\label{sec:summary-main-results}
All results assume drives with deterministic read times.   
\begin{enumerate}
      \item  In Sec.~\ref{sec:current-insights}, we develop a \ac{QCN} model that can be used to map
  a storage network with arbitrary file layouts across drives into a queueing network, including
  systems with nonuniform file or chunk replication.  
  In the spirit of \cite{RosBam:07}, Sec.~\ref{sec:constraints} describes our \ac{QCN} model as a
  moded system, whereby valid modes are described as inequalities that capture multipacket
  reception, drives with multiple service units, and multiple unicast, broadcast, and multicast
  traffic or communication patterns.  See Fig.~\ref{fig:switchOnlyModel_Ex1} for an example.
      \item Sec.~\ref{sec:conflict-graph} describes the construction of a conflict graph from the
  \ac{QCN} model, in which we divide our analysis into systems in which drives have
  infinite or finite I/O access bandwidth.
      \item Sec.~\ref{sec:characterizing-stabg} characterizes the stable set polytope for finite I/O
  bandwidth systems.  We show that, for systems under a
multicast traffic pattern, associated conflict graphs are not guaranteed to be claw-free.  However, systems with a broadcast
  only as well as broadcast or single unicast traffic patterns result in perfect and claw-free conflict graphs.
  In a system with sufficient multipacket reception or with a multiple unicast traffic pattern, the conflict graph is a quasi-line graph.  We
  then use recent results \cite{EisOriStaVen:08} that allows the
characterization of stable set polytopes for quasi-line graphs,
which are a strict superset of perfect graphs.  See Table \ref{tab:networkScenarios} for a summary.  
      \item Sec.~\ref{sec:char-rate-regi-1} adopts and adjusts techniques from Tassiulas \etal \cite{TasEph:92} to transform our conflict graph
  characterizations into an offline scheduling algorithm that is rate optimal for uncoded storage. 
  For coded storage, Sec.~\ref{sec:effect-coded-storage} develops a \ac{RR} upper bound, for which
  we provide an intuitive interpretation, or equivalently the \ac{RR}
  given particular dynamic coding systems.
The upper bound is found by adding links into an equivalent uncoded \ac{QCN} model, which
intuitively depicts coded storage's additional initial scheduling options.  

Sec.~\ref{sec:numerical-results} then presents examples and
  numerical results showing that the \ac{RR} of coded storage can subsume that of
uncoded storage, with increases in volume averaging 50\% across traffic patterns.  
\end{enumerate}

The remainder of this paper  is organized as follows.  The general system model and basic notation is
described in Sec.~\ref{sec:graph-repr}.  Preliminaries are detailed in
Sec.~\ref{sec:preliminaries}.  The \ac{QCN} model construction is detailed in
Sec.~\ref{sec:current-insights} and the characterization of associated conflict graphs is presented
in Sec.~\ref{sec:gener-confl-graph}.  Sec.~\ref{sec:effect-coded-storage} discusses the effect of
coded storage, and Sec.~\ref{sec:numerical-results} presents examples and numerical results.
Finally, Sec.~\ref{sec:disc--concl} concludes the paper.  

\vspace{-2mm}
\section{System Model}
\label{sec:graph-repr}

We study storage systems with the following system model.  
\begin{itemize}
      \item File layout: Without loss of generality, consider a single chunked file $ \mathcal F= \{ f_{1}, \dots, f_{T}\}$ is stored in drives,
  and the $ n$th drive stores a subset of chunks $ \mathcal
  F_{n} \subseteq \mathcal F$.\footnote{$ \mathcal{F} $ may represent one or more logical
    physical files.}  (We do not consider multisets in which single drives can store multiple
  chunk replicas.)  Let the total number of chunks stored in the system be equal to
  $W = \sum_{n} \vert \mathcal{F}_{n} \vert$, and $ \mathcal{F} \subseteq \cup_{n} \mathcal{F}_{n}$.
      \item Drive behavior:  Drives have deterministic read and communication pattern of one chunk
  per timeslot per service unit.  (We do not allow preemption or processor sharing between drives.)
  Drive $ n$ has $K_{n} \in \mathbb N_{+}$ service units vis-\`{a}-vis queueing theory.  We
  refer to each service unit as a \textit{virtual drive}, and label the set of virtual drives as
  $\{D_1, \ldots, D_k,\ldots,D_{R}\}$, where $ R=\sum_{n}K_{n}$.   
   \item User management: At any given time, the system can manage up to finite $N$ active users, denoted by $ \mathcal U = \{ u_{1}, \dots,
u_{N}\}$.  These $N$ users can be, for instance, subscribers to a system or connected routers or
other aggregating nodes in a larger content distribution network.  
    \item Server behavior:  As in classic \acf{PMP} networks \cite{FouCasSerMaiMed:13}, we
consider servers that can multicast chunks read from drives to user subsets with various structures,
including a multicast traffic pattern.
\end{itemize}

Consider a queueing network composed of a set of input queues or buffers $ \mathcal{Q} ^{I}$, each of
potentially infinite size, and a set of
output lines or sinks $ \mathcal{Q} ^{O}$ connecting to outside users.  Outside users send read request for file chunks to the
network, and requests arrive at $ \mathcal{Q} ^{I}$.  When a read request is serviced, appropriate
chunks are read from one or more drives, and that
read data is then transmitted to a set of users using output lines in $ \mathcal{Q} ^{O}$.  We say
that a read request has been \textit{serviced} when that request has left its input queue, the
requested chunk has been read from drives and then completed transmission on all appropriate output lines.  

All lines have the same capacity called the \textit{line rate.}   All virtual drives have
the same deterministic capacity and read-times.  Time is slotted, where the
length of a timeslot is the reciprocal of the line rate plus the read time of a service unit.  

\vspace{-2mm}
\section{Preliminaries}
\label{sec:preliminaries}

This section will introduce select topics in queueing and graph theory used later in the paper.  It will also introduce the
reader to the communications or traffic patterns that are considered throughout the paper.  Again,
refer to Fig.~\ref{fig:method} for an illustration of the method used in this paper.  Readers
fluent in both queueing and graph theory are encouraged to immediately read
Sec.~\ref{sec:current-insights} and to use this section simply as a reference for notation.  

\begin{figure*}[tb]
  \centering
  \includegraphics[width= \linewidth]{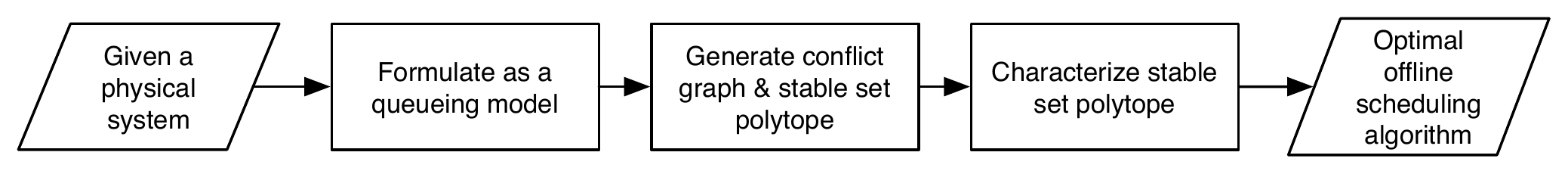}
  \caption{Summary of the technique used to generate scheduling algorithms in this paper.
    First, we formulate a physical system as a moded queueing model.  Second, we generate a conflict
    graph that depicts the constraints on the modes of this queueing model.  Third, we apply recent
    graph theory results that characterize the stable set polytope for this conflict graph.
    Fourth, we apply known techniques to translate a stable set polytope into an optimal
offline scheduling algorithm, in which case incoming traffic statistics
    are known.  The model presented in this  paper can also be applied to online scheduling systems
    in which case no knowledge of incoming traffic statistics is known, by applying policies such as
  that shown in \cite{TasEph:92}.}
  \label{fig:method}
\end{figure*}
 
\vspace{-2mm}
\subsection{Queueing Theory}
\label{sec:queu-theory-defin}

This subsection lists preliminary queueing theory definitions used throughout the paper.  The
reader is referred to \cite{Kle:B75} for a more thorough survey on queueing theory.  

\begin{definition}
  A \textit{flow} and \textit{rate} are the stream of all read request chunks, and the average
  number per timeslot, respectively, that arrive at some input   queue $ q \in \mathcal{Q} ^{I}$
  that need to  be serviced via output lines $ \mathcal{Q}
  ^{O}$.  Let $ \mathbf{r}\in \mathbb R_{+}^{\vert \mathcal{Q} ^{I}\vert}$ denote the \textit{rate
    vector} of all rates into
all input queues.  
\end{definition}

\begin{definition}
  A set of flows is called \textit{admissible} if the sum of the rates of all the flows through each input
  queue or output line does not exceed one, so inputs and outputs are not oversubscribed.  
\end{definition}

\begin{definition}
  A rate vector is said to be\textit{ achievable} if there exists a schedule that can serve it,
  while keeping all queues stable.
\end{definition}

\begin{definition}
  The \textit{rate region} is the set of all achievable rate vectors.
\end{definition}

\vspace{-2mm}
\subsection{Traffic Patterns }

We explore various storage, communication, link, and traffic patterns throughout the paper.  In
\acf{PMP} networks, a number of communication strategies or \textit{traffic patterns} are possible
between the physical drives  that read
chunks, and the users receiving those chunks, depending on
the system technologies.  Unless explicitly stated otherwise, all systems are assumed to have single packet
reception.\footnote{A packet will consist of the read chunk, as well as some overhead as required by the communication protocol. We assume that such overheads are negligible.}  

A \ac{PMP} storage system has a particular \textit{traffic pattern} if, in every timeslot, the set of feasible flows from
virtual drives to output lines must meet particular constraints.  General traffic
patterns are captured by the \acf{QCN} model  described in
Sec.~\ref{sec:current-insights}.  Specific traffic patterns analyzed in this paper are:
\begin{definition} \textit{(Single Unicast)}
A system uses a \textit{single unicast} traffic pattern if, in every timeslot, a maximum of one output
line can be used to service a chunk read by a single virtual drive.  
\end{definition}
\begin{definition}\textit{(Multiple Unicast)}
A system uses a \textit{multiple unicast} traffic pattern if, in every timeslot, each output line
can be used to service a chunk, but chunks transmitted along each output line must be read by distinct
virtual drives.  
\end{definition}
\begin{definition} \textit{(Broadcast)}
  A system uses a \textit{broadcast} traffic pattern if, in every timeslot, each output line must
  transmit the same chunk read by the same virtual drive.  
\end{definition}
\begin{definition} \textit{(Multicast)}
  A system uses a \textit{multicast} traffic pattern if, in every timeslot, each output line can be
  used to service a single chunk from any virtual drive.  
\end{definition}
\begin{definition} \textit{(Multipacket Reception)}
  A system uses \textit{$R_{x}(j)$-chunk multipacket reception} if, in each timeslot, the $j$th
  output line or channel can be used to transmit $R_{x}(j)$ unique chunks without error.  
\end{definition}
From a modeling perspective, multipacket reception (MPR) for communications can be viewed as a
generalization of speed-up in cross-bar switches
\cite{KimSunMedEryKot:11}.  Speed-up is obtained by the communication medium operating at a faster
rate, and multipacket reception can be obtained either by speed-up or by particular communication
receivers or codes being employed.

\vspace{-2mm}
\subsection{Coded Storage}
\label{sec:network-coding}

Coded storage allows physical drives to store linear combinations of chunks, as opposed to only individual
chunk subsets.  We
refer the reader to \cite{DimRamWuSuh:11} for a survey of codes for distributed storage.  We denote
the $i$th uncoded chunk as $ f_{i}$, and the corresponding $i$th coded chunk in the coded system,
stored in the same physical location, as
$ f_{i}^{c}$.  The set of chunks whose information is encoded or mixed with $ f_{i}^{c}$ is called the generation of
chunk $ f_{i}^{c}$, whose set of uncoded chunk indices we denote by $ g(i)$.  Chunks, be they coded
or not, consume the same storage space (ignoring any overhead for storing coding parameters).  In
this manuscript, we focus on $(\alpha,s)$ \ac{MDS}
codes, and do not allow the coded  chunks to update or
regenerate once they are loaded onto physical drives.  \ac{MDS} codes are those in which a set of
chunks or a generation is encoded into $ \alpha$ coded
chunks and, if a user downloads any $ s$ coded chunks, then that full generation can be decoded.   We
make two key assumptions regarding coding:
\begin{itemize}
      \item No user should receive a replica coded chunk from a generation prior to being able to
  decode that generation.
      \item If a chunk $ f_{i}$ is coded within any generation, then all replicas of $ f_{i}$ are
also coded.
\end{itemize}
A Reed-Solomon code is an example of an \ac{MDS} code.  During a particular timeslot, a user is said
to require $r$ additional \textit{degrees of freedom} if they require $r$ unique additional
coded chunks from the storage system to decode a particular generation to be able to decode chunks
in the associated generation.  Further, $ f_{i}^{wc}$ is said to be \textit{innovative} for a
user if receiving $ f_{i}^{c}$ would reduce the required degrees of freedom for that user by one.  

\vspace{-3mm}
\subsection{Graph Theory}
\label{sec:existance-claws}
This subsection lists graph theory definitions used throughout the paper.  We refer the reader to
\cite{Wes:B01} for a more thorough  survey on graph theory.  Definitions
are across graph $ G=(V,E)$, composed of vertices $V$ and edges $E$.  

\begin{definition} \textit{(Hyperedge)}
  A \textit{hyperedge} is of graph $G$ is an edge $ e\in E \subseteq \mathcal{P} (V)$, the power set
  of $V$.  In particular, a hyperedge can connect
  any number of vertices from that graph, instead of only two vertices.
\end{definition}

\begin{definition} \textit{(Incidence vector)}
The\textit{ incidence vector} of a set of vertices $ V_{1} \subseteq
  V(G)$ is a $\{0,1\}$-vector $ \mathbf x$ whose entries are labeled with the vertices of $G$.  If $
  x_{i} = 1$, then vertex $i$ is in $ V_{1}$; otherwise, $ i \notin V_{1}$.  
\end{definition}
\begin{definition} \textit{(Clique)}
  A  subgraph is called a \textit{clique} if all vertices in the subgraph are pairwise connected.  
\end{definition}

\begin{definition} \textit{(Stable set)}
  A set of vertices $ V_{i} \subseteq V$ forms a\textit{ stable set} if for every
  pair of vertices in $ V_{1}$, there is no edge connecting the two.  
\end{definition}
\begin{definition} \textit{(Stability number)}
  The\textit{ stability number} $ \alpha(G)$ is the maximum cardinality
of a stable set of $G$.
\end{definition}
\begin{definition} \textit{(Stable set polytope)}
  The\textit{ stable set polytope} $STAB(G)$ of a graph $G=(V,E)$ is
  the convex hull of the incidence vectors $\mathbf x$ of the stable sets of $G$.  
\end{definition}
\begin{definition} \textit{(Claw-free graph)}
We say that a conflict graph is
\textit{claw-free} if no induced subgraph of $G$ is a vertex with three pairwise disconnected
neighbors.
\end{definition}
\begin{definition} \textit{(Quasi-line graph)}
A graph is a \textit{quasi-line graph} if the closed
neighborhood of every vertex can be partitioned into two cliques.
\end{definition} 
\begin{definition}\textit{(Chromatic number)}
The \textit{chromatic number} of graph $ G $ is the smallest number of colors needed to color the
  vertices of $G$ so that no two adjacent vertices share the same color.
\end{definition}
\begin{definition} \textit{(Perfect graph)}
A graph is \textit{perfect} if the chromatic number of every induced
  subgraph equals the size of the largest clique of that subgraph.  
\end{definition}

\vspace{-2mm}
\subsection{Conflict Graphs}
\label{sec:confl-graph-constr}

Conflict graphs are discussed in detail in \cite{Sch:B03,GroLovSch:B93}.  A brief overview follows.
Given network graph $ G_{\text{net}}=(V _{\text{net}},E
_{\text{net}})$, and associated feasibility constraints across $ E _{\text{net}}$,
conflict graphs allow the visualization of those 
feasibility constraints.  In this paper conflict graphs are between hyperedges in the queueing
network model.  In general, conflict graph
construction generates a simple\footnote{A graph is simple if it has no loops or parallel edges.} and finite
conflict graph $ G = (V, E)$ as follows:
\begin{itemize}
      \item For every possible hyperedge $ e \in E_{\text{net}}$, create a set of vertices $ v_{(e,j)}$ in $
   V$ so that there is a one-to-one correspondence between all possible states $j$ of
hyperedge $e$ (excluding the empty set state $ \emptyset$) and the vertices $ v_{(e,j)}$.
      \item Connect vertices $ v_{(e,j)}$ and $ v_{(e',j')}$ if assigning state $j$ to $e$ and
  state $ j'$ to $ e'$ simultaneously is impossible due to a conflict across feasibility constraints \cite{KimSunMedEryKot:11}.
\end{itemize} 
A stable set from the conflict graph $STAB(G)$ represents a collection of links that can operate simultaneously
without conflict, hence it represents a valid system mode.  The \ac{SSP} can
be thought of as the convex combination of all valid modes and through timesharing, any point in the
\ac{SSP} can be set as the system operating point.  

General conflict graphs have the potential to have a large number of states and to be
computationally intractable.  Indeed, for general graphs the problem of
solving the maximum stable set problem is known to be \textit{NP}-hard.  If the 
graph has particular structure such as being claw-free, then the maximum stable set problem can be solved in polynomial
time.  However, more than 20 years after the discovery of a polynomial algorithm for the
maximum stable set problem for claw-free graphs, the explicit description or characterization of the
\ac{SSP} for claw-free graphs remains an  open problem \cite{EisOriStaVen:08}.  Recently, it was proved that if
the conflict graph is a quasi-line graph (a strict subset of claw-free graphs), then the \ac{SSP}
can be characterized exactly using the clique-family inequalities presented in \cite{EisOriStaVen:08}.
We use this result in this paper.  In addition, note that if the conflict graph is perfect
(a strict subset of quasi-line graphs),
then the \ac{SSP} can also be exactly characterized using the techniques summarized by Kim \etal
\cite{KimSunMedEryKot:11}.  

\subsection{Generating Rate Regions from Conflict Graphs}
\label{sec:STAB}
For networks composed of buffers, each of potentially infinite size, and without multicast capabilities, it is well known that the rate region $ \mathbf{R}$ is given by 
\begin{align}
 \mathbf{R} = \Bigg \{ &\rho \in \mathbb R_{0+}^{NT} \colon & \nonumber \\ & \rho \leq \sum_{m \in \mathcal{M} }
    \left. \phi_{m} \xi_{m}, \text{for some } \phi_{m} \geq 0, \sum_{m} \phi_{m} = 1  \right\} \, ,
\end{align}
where $ \xi_{m}$ are the maximum stable sets of the conflict graph, and $ \mathbb R_{0+}^{NT}$ denotes
the non-negative real $NT$-vectors \cite{RosBam:07}, where $ NT$ is the total number of input
buffers in $ \mathcal{Q}^{I}$.  This is exactly the \ac{SSP} of the traffic
pattern's conflict graph.  In addition, it has been shown in \cite{SunDebMed:07} that in infinite buffer
networks with multicast traffic patterns but no fanout-splitting, the \ac{RR} is
again the \ac{SSP} of the traffic pattern's conflict graph.  

\vspace{-2mm}
\subsection{Generating Scheduling Algorithms from Conflict Graphs}
\label{sec:devel-sched-algor}

We consider the development of offline scheduling algorithms, which require knowledge of the incoming
traffic statistics of read requests into $ \mathcal{Q} ^{I}$.  Given a cross-bar switch network, the stable set polytope from its conflict graph, and a particular
operating point that is within the stable set polytope, it has been
shown that \textit{frame-based} algorithms, with parameters appropriately  chosen, can serve any traffic
pattern in the stable set polytope and achieve 
maximum throughput \cite{KimSunMedEryKot:11}.     A frame is a set
of $F$  consecutive timeslots,  where $ F$ is the frame
size.  Frame-based schedules are specified by a sequence of $F$ mode schedules and the scheduler
cycles through these modes periodically.  The authors in
\cite{KimSunMedEryKot:11} generated conflict graphs such that each system queue can be served by a maximum of one vertex,
and although this is an assumption we shall generalize, the ideas presented herein will
rely heavily on the frame-based offline scheduling from \cite{KimSunMedEryKot:11}.

\vspace{-2mm}
\section{Queued Cross-bar Network Model}
\label{sec:current-insights}
This section presents our general queued cross-bar network (QCN) model, which is a queueing model constructed from a 
physical storage network, as per
Sec.~\ref{sec:graph-repr}.  We generate this queueing network as follows.  
\begin{itemize}
      \item For chunk $ f_{i}$ and user $ u_{j}$, there exists one infinite size input queue labeled $q_{f_{i},
u_{j}}$.  Consider
$ \mathcal Q^{I}$ the set of $ T\times N$ input queues.
    \item  For every user $ u_{j}$ we create
an output sink $ q_{u_{j}}$, and denote $ \mathcal Q^{O}$ the set of $N$
output sinks.  
    \item As a reminder of the system model, for each service unit on each physical drive, we create a \textit{virtual drive} $ D_k$.  In the
context of the \ac{QCN} model, when we refer to drives we are always referring to virtual drives.
    \item For any $ f_{i}$ that drive $ D_k$ can read, we draw an edge from every queue in $ \{ q_{f_{i}u_{j}} \in \mathcal Q^{I} \colon j
\in \{ 1, \ldots, N \}\}$, i.e., corresponding to chunk $
f_{i}$ for any user, to output line $ q_{u_{j}} \in \mathcal Q^{O}$ with label $ D_k$
(labels are not necessarily unique).
\end{itemize}

System connectivity is represented using three matrices.  Let chunk-drive connectivity be defined by $ \mathcal N$, a $ T\times R $ matrix such that each
element is defined as
\begin{align}
  \mathcal N (i,k) = 
  \begin{cases}
    1 \quad & \text{if } f_{i} \in \mathcal F_{k} \\
0 & \text{otherwise.}
  \end{cases}
\end{align}
During each timeslot, set the status of each user's file knowledge through a $T\times N\times R $ matrix $ \mathcal{S} $,
such that for a particular timeslot, each element is defined as
\begin{align}
  \mathcal{S} (i,j,k) = 
  \begin{cases}
    1 \quad & \text{if chunk } f_{i} \text{, that is stored on drive } D_{k} \text{, would be}\nonumber \\
& \text{innovative for user } u_{j}  \text{ in this timeslot} \\
0 & \text{otherwise.}
  \end{cases}
\end{align}
 The \textit{mode set} is given as follows.  Let $ \mathcal M = \{ \mathcal M_{m} \}$ be
the set of all modes,
where $ \mathcal M_{m}$ is the $ m$th mode which is an 
$ T \times N\times R$ matrix where each element is defined as  
\begin{align}
  \mathcal M_{m} (i,j,k) = 
  \begin{cases}
    1 \quad & \text{if user $u_{j}$ receives chunk $f_{i}$} \\ & \text{via drive $D_{k}$} \\
0 & \text{otherwise.}
  \end{cases}
\end{align}
Let chunk-drive usage indicator $ r_{m}(i,k)$ be defined as,
\begin{align}
  r_{m}(i,k) =
  \begin{cases}
    1 \quad & \sum_{j} \mathcal M_{m}(i,j,k) \geq 1\\
    0 & \text{otherwise.}
  \end{cases}
\end{align}

We call the system the \textit{\acf{QCN} model.}  See
Fig.~\ref{fig:switchOnlyModel_Ex1} for a simple illustration.  

\begin{figure}[h]
  \centering
  \includegraphics[width=\linewidth]{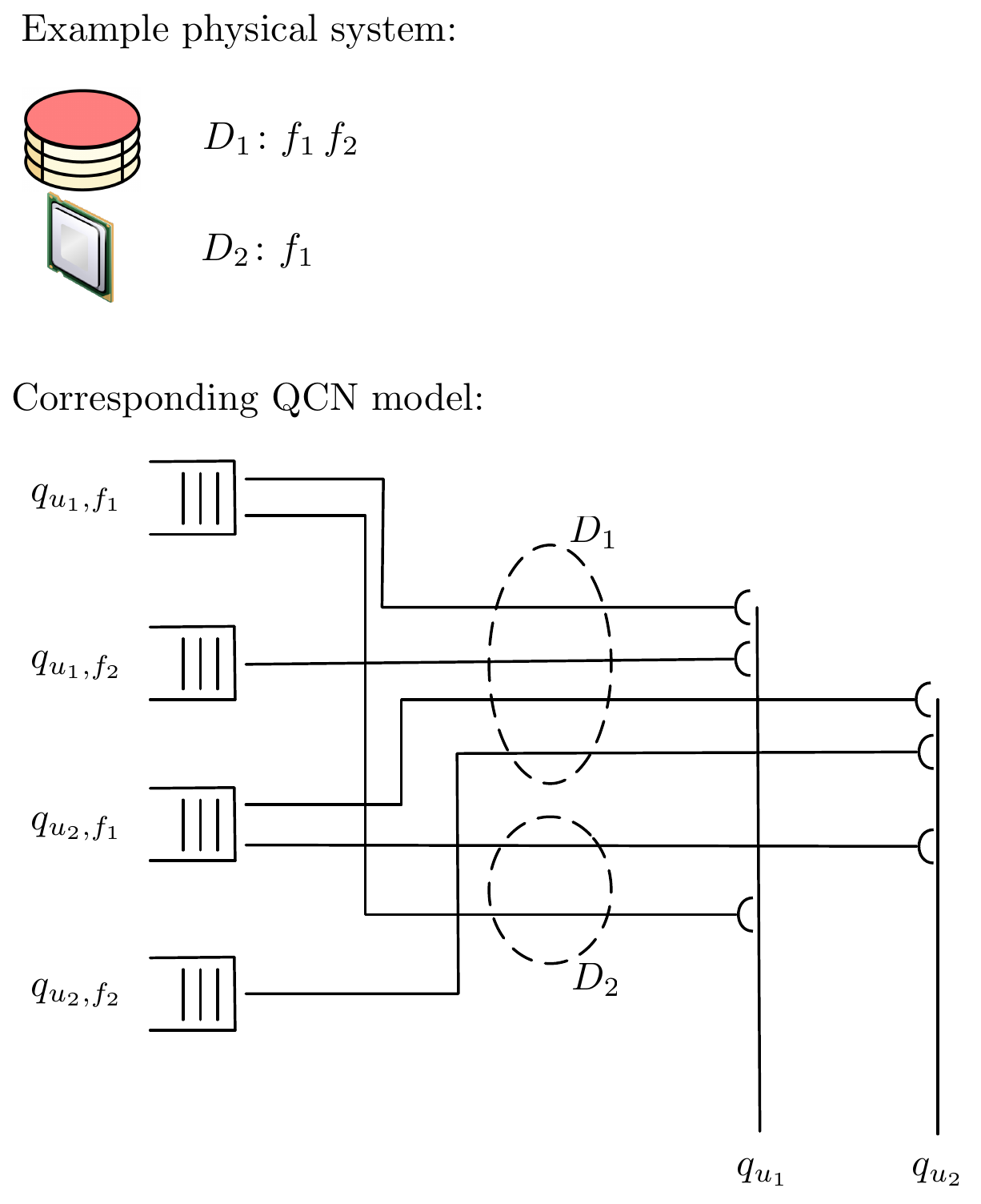}
  \caption{Example illustration of the \acf{QCN} method, modeling a physical system with two
    users, and two drives storing overlapping chunk
    sets.  }
  \label{fig:switchOnlyModel_Ex1}
\end{figure}

\subsection{Constraints}
\label{sec:constraints}

The set of constraints required for mode $ \mathcal M_{m}$ to be \textit{valid} are as follows.  
\begin{itemize}
      \item A user never receives a non-innovative chunk
  \begin{align}
    \mathcal{M} _{m}(i,j,k) \leq \mathcal{S} (i,j,k) \, , \quad \forall i,j,k
  \end{align}
      \item Up to $R_{x}(j)$ chunks can be received by the $j$th user in each timeslot, i.e., user $j$
  has $R_{x}$-chunk multipacket reception capability
  \begin{align}
\label{eq:MPRConstraint}
    \sum_{i,k} \mathcal M_{m}(i,j,k) \, \mathcal N(i,k) \leq R_{x}(j)\, , \quad \forall j \in \{ 1,\ldots , N \}
  \end{align}
%    \item Each chunk for each user is read by up to one connected drive
%\begin{align}
%     \sum_{k} \mathcal M_{m}(i,j,k) \, \mathcal N(i,k) \leq 1 \, , \quad & \forall j \in \{ 1,\ldots
%     , N \} \nonumber \\
%     & \forall i \in \{ 1,\ldots , T \}
%\end{align}
      \item The $k$th (virtual) drive allows up to one read per timeslot
  \begin{align}
  \label{eq:QSSConstraint_kthDriveAllowsOneReadPerTimeslot}
    \sum_{i} r_{m}(i,k) \leq 1 \, , \quad \forall k \in \{ 1, \ldots, R\} \, .
  \end{align}
    \item Traffic pattern constraints:
\begin{itemize}
\item  Single unicast constraint: Only a single chunk can be transmitted to a single output line
\begin{align}
\sum_{i,j,k} \mathcal{M} _{m}(i,j,k) \mathcal{N} (i,k) \leq 1
\end{align}
      \item Multiple unicast constraint: Only up to one user can
receive a chunk $ f_{i}$ from the same drive $ D_k$
\begin{align}
      \sum_{j} \mathcal M_{m}(i,j,k) &\, \mathcal N(i,k) \leq 1 \nonumber \\ &\forall (i,k) \in \{ 1, \ldots , T \} \times
  \{1,\ldots , R \} \, 
\end{align}
    \item Broadcast constraint:  If a read chunk $ f_{i}$ is transmitted from $ D_{k}$ to a user,
then it is transmitted to all users
\begin{align}
&  \exists \mathcal{Y}\subseteq\{ (i,k)\colon \mathcal{N}(i,k) = 1\} \nonumber \\
&s.t. \nonumber \\
& \sum_{j} \mathcal M_{m}(i,j,k) \, \mathcal N(i,k) = N \quad \forall (i,k) \in \mathcal{Y}\\
& \sum_{j} \mathcal M_{m}(i',j,k') \, \mathcal N(i',k') = 0 \quad \forall (i',k') \notin \mathcal{Y} \nonumber
\end{align}
    \item Multicast constraint: Up to $N$ users can receive the same chunk from the same drive
\begin{align}
  \sum_{j} \mathcal M_{m}(i,j,k) &\, \mathcal N(i,k) \leq N \nonumber \\ & \forall (i,k) \in \{ 1, \ldots , T \} \times
  \{1,\ldots , R \} \, .
\end{align}
\end{itemize}

\end{itemize}
The \ac{QCN} model incorporates arbitrary
chunk layouts, service unit numbers, as well as traffic patterns, which may make it a useful tool in
high-traffic storage network analysis.  Example
systems with the traffic patterns described above are as follows.  Some systems may be restricted to multiple unicast traffic patterns if they do not
allow caching, others may be restricted to broadcast traffic patterns if they are transmitting
wirelessly.  Systems can be modeled as having  multipacket reception if their scheduling is done at
the level of frames, or multiple timeslots if output buffers exist on each
output line.

\subsection{Properties}
\label{sec:correctness}
To explore the storage system types that the \ac{QCN} model can capture and model, we introduce the following properties.  

\begin{definition} \textit{(Conservation of flow:)}
We say that a queueing model has the \textit{conservation of flow } if in any timeslot, the sum of
the number
of read requests serviced across input queues $ \mathcal Q^{I}$ is equal to the sum of the number
of chunks received at output lines $ \mathcal Q^{O}$, regardless of traffic pattern
constraints. 
\end{definition}

\begin{definition} \textit{(Multiple service unit property:)}
We say that a queueing model has the \textit{multiple service unit property}, if for any physical
drive $n$, up to any fixed $K_{n} \in \mathbb
N_{+}$ unique and stored chunks can be read from physical drive $n$ in any timeslot.
\end{definition}
A queueing model that has both conservation of flow and the multiple service unit property can
be used to model a large variety of systems, regardless of chunk-to-drive layouts.  
\begin{thm}
Any \ac{QCN} model has conservation of flow.
\end{thm}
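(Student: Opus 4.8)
The plan is to prove the conservation-of-flow identity link by link. Fix an arbitrary timeslot, in which the scheduler plays some valid mode $\mathcal{M}_m$, and set $L(m):=\sum_{i,j,k}\mathcal{M}_m(i,j,k)$, the number of active drive-labelled links of the \ac{QCN} graph during that slot. I will show that the number of read requests serviced across $\mathcal{Q}^{I}$ and the number of chunks received at $\mathcal{Q}^{O}$ are \emph{both} equal to $L(m)$, whence they coincide. The structural fact that makes this insensitive to the traffic pattern is that $L(m)$ and the correspondences below depend only on the \ac{QCN} graph construction of Sec.~\ref{sec:current-insights} and on the entries of $\mathcal{M}_m$, never on the single-/multiple-unicast, broadcast, multicast, or multipacket-reception inequalities. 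Moreover, since drives read one chunk per timeslot per service unit without preemption and a timeslot is, by Sec.~\ref{sec:graph-repr}, exactly long enough for one read plus one line transmission, every active link begins and finishes inside the slot, so the instantaneous counts are well defined and nothing carries over a slot boundary.

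Next I would run the two countings. By the model's operation in Sec.~\ref{sec:graph-repr}, a link $(i,j,k)$ is active precisely because a pending request of $u_{j}$ for $f_{i}$ is being serviced through drive $D_{k}$; since a request is for the single atomic chunk $f_{i}$ and its only relevant output line is $q_{u_{j}}$, distinct active links correspond to distinct serviced requests and conversely, so the number of read requests serviced across $\mathcal{Q}^{I}$ equals $L(m)$. The same active link $(i,j,k)$ then deposits exactly one (error-free) chunk copy $f_{i}$ into output sink $q_{u_{j}}$, and conversely every chunk arriving at some output sink in that slot is carried by exactly one active link — in a broadcast or multicast mode a single physical read of $f_{i}$ from $D_{k}$ is simply represented by several links $(i,j,k)$, one per served user, each delivering its own copy to its own sink. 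Hence the number of chunks received at $\mathcal{Q}^{O}$ is also $L(m)$, and the two totals are equal; since no traffic-pattern inequality was invoked, this holds regardless of the traffic pattern, which is exactly the claim.

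The main obstacle is not depth but making the two bijections airtight, i.e.\ pinning down precisely what ``a serviced read request'' and ``a received chunk'' are counted as so that each set is genuinely in one-to-one correspondence with the active links of $\mathcal{M}_m$. The two places needing care are an input queue holding several pending requests (distinct active links with the same $(i,j)$ clear distinct requests, and the model activates a link only against a request that is actually there) and broadcast/multicast modes in which one drive read feeds many output lines; once these identifications are fixed, the equality is immediate.
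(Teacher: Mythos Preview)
Your approach is correct and takes a genuinely different route from the paper. The paper argues by contradiction: it supposes some valid mode $\mathcal{M}_m$ fails conservation of flow, splits into the two directions of imbalance, and in the ``more serviced than received'' case invokes the multipacket-reception constraint~(\ref{eq:MPRConstraint}) to rule out overflow on an output line, and in the other direction appeals to the topology (no link from $q_{f_i,u_j}$ to any sink other than $q_{u_j}$). You instead give a direct double-counting: define $L(m)=\sum_{i,j,k}\mathcal{M}_m(i,j,k)$ and exhibit a bijection from active links to serviced requests on one side and to delivered chunk copies on the other, so both totals equal $L(m)$.

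What your approach buys is that it never touches any of the traffic-pattern inequalities, which is exactly what the ``regardless of traffic pattern constraints'' clause in the definition asks for; the paper's proof, by contrast, leans on~(\ref{eq:MPRConstraint}) and~(\ref{eq:QSSConstraint_kthDriveAllowsOneReadPerTimeslot}) to close one of its cases, which is a slightly awkward fit with that clause. Your argument also makes transparent why broadcast/multicast causes no trouble: a single physical read fanned out to several users is already represented by several distinct triples $(i,j,k)$, one per sink, so the bijection is unaffected. The one point you correctly flag as needing care --- that an active link is only counted when a request is actually present in $q_{f_i,u_j}$, so that empty queues do not create phantom deliveries --- is a modeling convention the paper also leaves implicit; once stated, both sides of your count drop that link simultaneously and the equality survives.
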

\begin{proof}
Any queueing network with fixed topology and in which all service units have deterministic service
time has conservation of flow by construction.  For a \ac{QCN} model, in each timeslot a single mode
is selected and so we need to check that no mode exists which does not have conservation of flow.

Suppose mode $\mathcal{M} _{m}$ does not have conservation of flow.  In this case, either we service more read requests from $
\mathcal{Q} ^{I}$ than are received at $ \mathcal{Q} ^{O}$, or we transmit more chunks down output
lines than are serviced across $ \mathcal{Q} ^{I}$.  

Suppose mode $ \mathcal{M} _{m}$ services more read requests than are received by users.
This implies there exists output line $ q_{u_{j}}$ that receives less chunks than are serviced at
queues $ \{ q_{f_{i},u_{j}} \}_{i=1}^{T} $.  $ \mathcal{N} (i,k)$  implies any valid mode can only
activate edges on virtual drives with chunks in demand, and owing to
(\ref{eq:QSSConstraint_kthDriveAllowsOneReadPerTimeslot}), there is either overflow servicing from the
same chunk on different
virtual drives, or from different chunks from different drives.  Suppose it is the same chunk on
different virtual drives; if those cross bars are activated then $ \sum_{k} \mathcal{M} _{m}(i,j,k)$
cross bars are activated.  The only way for such invalid overflow to occur is if $ \sum_{k}
\mathcal{M} _{m}(i,j,k) > R_{x}(j)$, and by (\ref{eq:MPRConstraint}) we have a contradiction.  The
same contradiction holds for different chunks on different virtual drives.  

Suppose mode $ \mathcal{M} _{m}$ services fewer read requests than are received by users.  This
implies there exists a queue $ q_{f_{i},u_{j}}$ that transmits a request to more than $ q_{u_{j}}$.
However, by construction no such labeled links exist in the \ac{QCN} model, so we have a contradiction.

\end{proof}

\begin{thm}
  The \ac{QCN} model has the multiple service unit property.
\end{thm}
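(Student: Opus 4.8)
The plan is to exploit the one feature of the \ac{QCN} construction that was built in for exactly this purpose: each of the $K_{n}$ service units of physical drive $n$ is mapped to its own virtual drive $D_{k}$, and every virtual drive is subject to the single-read inequality (\ref{eq:QSSConstraint_kthDriveAllowsOneReadPerTimeslot}). So I would fix a physical drive $n$, let $\mathcal{K}_{n}\subseteq\{1,\ldots,R\}$ be the index set of the virtual drives created from $n$ (so $|\mathcal{K}_{n}|=K_{n}$, and the $\mathcal{K}_{n}$ partition $\{1,\ldots,R\}$), and then argue the two halves of the claim: an upper bound ($\leq K_{n}$ reads in any valid mode) and tightness ($K_{n}$ reads are realizable).

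For the upper bound, take any valid mode $\mathcal{M}_{m}$. Constraint (\ref{eq:QSSConstraint_kthDriveAllowsOneReadPerTimeslot}) gives $\sum_{i}r_{m}(i,k)\leq 1$ for each $k\in\mathcal{K}_{n}$, hence $\sum_{k\in\mathcal{K}_{n}}\sum_{i}r_{m}(i,k)\leq K_{n}$; since the number of distinct chunks read from physical drive $n$ is no larger than this sum, no valid mode reads more than $K_{n}$ unique chunks from $n$. I would also note that $r_{m}(i,k)=1$ forces $\sum_{j}\mathcal{M}_{m}(i,j,k)\geq 1$, which together with the innovation constraint $\mathcal{M}_{m}(i,j,k)\leq\mathcal{S}(i,j,k)$ and the definitions of $\mathcal{S}$ and $\mathcal{N}$ implies $f_{i}$ is actually stored on that virtual drive, so the chunks being counted are genuinely ``stored'' chunks, as the property requires.

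For tightness, assuming drive $n$ stores at least $K_{n}$ chunks, I would pick $K_{n}$ distinct stored chunks and assign the $\ell$th of them to the $\ell$th virtual drive in $\mathcal{K}_{n}$, setting $r_{m}(i,k)=1$ on exactly those pairs, and then exhibit output-line assignments $\mathcal{M}_{m}(i,j,k)$ that make this a valid mode. The point to make here is that (\ref{eq:MPRConstraint}) and the traffic-pattern inequalities constrain only per-user / per-output-line usage, not per-drive usage, so feasibility of the constructed mode reduces to delivering each chosen read to some user for whom it is innovative. I expect this last reduction to be the main obstacle, because the innovation constraint (via $\mathcal{S}$) and the traffic pattern are global: the clean statement ``$K_{n}$ reads are always realizable'' needs a hypothesis on the state $\mathcal{S}$ (for instance, that enough users still demand those chunks, which holds at system initialization when every chunk is innovative for every user). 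If the intended reading of the multiple service unit property is only the upper bound, the second paragraph already closes the proof and this construction can be relegated to a remark.
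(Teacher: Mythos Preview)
Your approach is essentially the paper's: the property follows directly from the construction that maps the $K_{n}$ service units of physical drive $n$ to $K_{n}$ separate virtual drives, each governed by the per-virtual-drive constraint~(\ref{eq:QSSConstraint_kthDriveAllowsOneReadPerTimeslot}). The paper's own proof is a one-line ``holds by construction'' remark and does not engage with the tightness subtlety you raise; your suspicion that the intended reading is the upper-bound/modeling claim (your second paragraph) is correct, and the achievability discussion can indeed be relegated to a remark.
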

\begin{proof}
This follows naturally from the construction of the \ac{QCN} queueing model from a physical network.  If a physical
drive $n$ has $ K_{n}$ service units, we construct $ K_{n}$ virtual drives, all with access to physical
drive $n$'s chunks.  All constraints are indexed across virtual drives in $k$, so the property
holds by construction.  
\end{proof}

\section{Rate Region Characterization}
\label{sec:gener-confl-graph}

In this section we generate our conflict graph and then characterize the rate region.  

\vspace{-2mm}
\subsection{Conflict Graph}
\label{sec:conflict-graph}

Conflict graphs are constructed as follows.  We use hyperedges as
defined by traffic pattern constraints, so that all data is transmitted along the same fingers of
the same hyperedge.  

In our conflict graph analysis, we restrict ourselves to edge-based conflict graphs.  A conflict graph
can be  \textit{edge-based} if $ R_{x}(j) \in \{ 1, T, T+1, T+2, \ldots
\} \,\, \forall j$; for $R_{x}(j)$-multipacket reception with $ 1<R_{x}(j) <T$, then
  the conflict graph may require hyperedges to capture the selection of different copies of
  chunks.  

Drive I/O access bandwidth---tightly coupled to the number of service units per drive---is a key
parameter of modern storage systems.  As such, to build  up
intuition, we begin by analyzing simpler systems with infinite I/O access bandwidth, and then move
to finite bandwidth systems.  An infinite I/O system model would be helpful when individual drive
blocking is \textit{not} a bottleneck, and instead traffic patterns are the main constraint so  wish
to analyze their effects  directly.

\subsubsection{Infinite I/O Access Bandwidth} Consider the case when $ K_{n} \geq T \, , \,\,\, \forall n$.
In this scenario, a drive can read out all chunks simultaneously  in a
  timeslot, so no intra-drive conflicts can arise.    

As defined prior, $ \mathcal{U} = \{ u_{1}, \ldots,u_{N}\} $ is the set of active users in the
system.  To simplify the problem and reduce vertex numbers, we define valid conflict graph vertices using the
set of valid traffic pattern constraints.  The connectivity between these vertices is then set by storage and link constraints.

\textit{Vertices: } For every chunk stored in the system $ f_{i}$, we ignore the drive that
stores it since each drive has infinite I/O bandwidth.  Given a particular conflict graph
traffic pattern constraint, we generate a set of vertices in our conflict graph as 
\begin{itemize}
      \item Multicast: $\{ v_{f_{i},u_{\mathcal{S} }} \}_{\mathcal{S} \in \mathcal{P}_{\geq
      1}(\mathcal{U})  }$, where $\mathcal{P}_{\geq 1} (\mathcal{U})$ is the powerset of all
  active users excluding the empty set of users.  The total number of vertices in the conflict graph then scales as $
  \mathcal{O}(T2^{N})$.
      \item Broadcast: $\{v_{f_{i},u_{\mathcal{S}}}\}_{\mathcal{S} \in \mathcal{P}_{\geq N}
    (\mathcal{U})}$, i.e., a single $ \{ v_{f_{i},u_{\forall}}\}$ vertex.  The total number of
  vertices in the conflict graph scales as $ \mathcal{O} (T)$.
      \item Multiple unicast: $ \{ v_{f_{i},u_{j}}\}_{j=1}^{N}$, i.e., a set of $N$ vertices.  The total
  number of vertices in the conflict graph scales as $ \mathcal{O} (TN)$.  
\end{itemize}

\textit{Edges:}  Consider two vertices generated by the traffic pattern constraints, $
v_{f_{i_{1}},u_{\mathcal{S}_{1}}}$, and $ v_{f_{i_{2}},u_{\mathcal{S}_{2}}}$.  Given some $ k_{1}$ and
$ k_{2}$ such that $ \mathcal{N} (i_{1},k_{1})= \mathcal{N} (i_{2},k_{2}) = 1$, if setting
\begin{align}
 \mathcal{M}_{m}(i_{1}, j_{1},k_{1})=1 \quad \forall j_{1} \in \mathcal{S} _{1}
\end{align}
and
\begin{align}
  \mathcal{M}_{m}(i_{2}, j_{2},k_{2})=1 \quad \forall j_{2} \in \mathcal{S} _{2}
\end{align}
violates at least one storage or link constraint as in Sec.~\ref{sec:constraints}, then connect $ v_{f_{i_{1},u_{\mathcal{S}_{1}}}}$,
and $ v_{f_{i_{2},u_{\mathcal{S}_{2}}}}$ with an edge.  

\vspace{2mm}
\subsubsection{Finite I/O Access Bandwidth} 

Consider the case when $ K_{n} < T \, , \,
  \, \forall n$.
 
\textit{Vertices: }  Similarly to the infinite I/O scenario, we generate viable vertices via our
traffic pattern constraints.  The primary addition is that we generate separate vertices for
duplicate chunks stored on other virtual drives.  

Given our chunk-drive connectivity matrix $ \mathcal{N}$, for every non-zero
element of $\mathcal{N}(i,k)$ we generate a set of vertices in our conflict graph, given by our 
traffic pattern constraints as:
\begin{itemize}
      \item Multicast: $\{ v_{f_{i,k},u_{\mathcal{S} }} \}_{\mathcal{S} \in \mathcal{P}_{\geq
      1}(\mathcal{U})  }$, where $\mathcal{P}_{\geq 1} (\mathcal{U})$ is the powerset of all
  active users excluding the empty set of users.  The total number of vertices then scales as $
  \mathcal{O}(TR2^{N})$.  As a reminder, $T$ is the number of chunks, $R$ the number of virtual
  drives, and $N$ the number of users.  
      \item Broadcast: $\{v_{f_{i,k},u_{\mathcal{S}}}\}_{\mathcal{S} \in \mathcal{P}_{\geq N}
    (\mathcal{U})}$, i.e., a single $ \{ v_{f_{i,k},u_{\forall}}\}$ vertex.  The total number of
  vertices scales as $ \mathcal{O} (TR)$.
      \item Multiple unicast: $ \{ v_{f_{i,k},u_{j}}\}_{j=1}^{N}$, i.e., a set of $N$ vertices.  The total
  number of vertices scales as $ \mathcal{O} (TRN)$.  
\end{itemize}

\textit{Edges:}  Consider two vertices generated by the traffic pattern constraints, $
v_{f_{i_{1},k_{1}},u_{\mathcal{S}_{1}}}$, and $ v_{f_{i_{2},k_{2}},u_{\mathcal{S}_{2}}}$.  If setting
\begin{align}
 \mathcal{M}_{m}(i_{1}, j_{1},k_{1})=1 \quad \forall j_{1} \in \mathcal{S} _{1}
\end{align}
and
\begin{align}
  \mathcal{M}_{m}(i_{2}, j_{2},k_{2})=1 \quad \forall j_{2} \in \mathcal{S} _{2}
\end{align}
violates at least one storage or link constraint from Sec.~\ref{sec:constraints}, then connect $ v_{f_{i_{1},k_{1}},u_{\mathcal{S}_{1}}}$,
and $ v_{f_{i_{2},k_{2}},u_{\mathcal{S}_{2}}}$ with an edge.  

\vspace{2mm}
As a simple example of a conflict graph, consider a storage system with two users $u_1$ and $u_2$
and two chunks $f_1$ and $f_2$ stored on two drives as $D_1: f_1$ and $D_2:f_2$.  The conflict graph
of this system under multicast traffic pattern with $R_x(1)=R_x(2)=1$ and drives having single
service units is shown in Fig.~\ref{fig:example_conflict_graph}(a). In this graph, as can be seen,
since each user can receive only up to one chunk at each timeslot, the vertices that correspond to
the same user conflict with each other. In addition, since drives have single service units, the
vertices that correspond to the same drive conflict with each other.  Furthermore, the conflict graph of the above system under multicast traffic pattern with $R_x(1)=R_x(2)=2$ is shown in Fig.~\ref{fig:example_conflict_graph}(b). In this graph, since multipacket reception is allowed, receivers are able to receive up to two chunks per timeslot. Therefore, there is no conflict between the vertices of the same user. However, due to drives' single service units, still the vertices that correspond to the same drive conflict with each other.
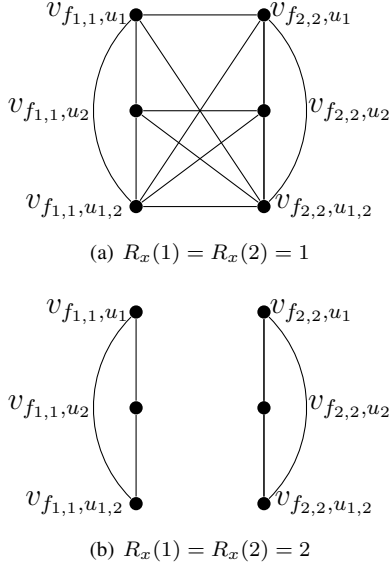
\begin{figure}[thb!]
\centering
\subfigure[$R_{x}(1) = R_{x}(2) = 1$]{\begin{tikzpicture}[scale=0.85]
\node[vertex,minimum size=5pt,color=black,text=black] (v1) at (-1,1.5)  {};
\node[vertex,minimum size=5pt,color=black,text=black] (v2) at (-1,0)  {};
\node[vertex,minimum size=5pt,color=black,text=black] (v3) at (1,1.5)  {};
\node[vertex,minimum size=5pt,color=black,text=black] (v4) at (1,0)  {};
\node[vertex,minimum size=5pt,color=black,text=black] (v5) at (-1,-1.5) {};
\node[vertex,minimum size=5pt,color=black,text=black] (v6) at (1,-1.5) {};
\node [font=\large] at (-1.75,1.5) {$v_{f_{1,1},u_1}$};
\node [font=\large] at (1.75,1.5) {$v_{f_{2,2},u_1}$};
\node [font=\large] at (-2.35,0) {$v_{f_{1,1},u_2}$};
\node [font=\large] at (2.35,0) {$v_{f_{2,2},u_2}$};
\node [font=\large] at (-1.95,-1.5) {$v_{f_{1,1},u_{1,2}}$};
\node [font=\large] at (1.95,-1.5) {$v_{f_{2,2},u_{1,2}}$};
\draw[-] (v1) to (v2);
\draw[-] (v1) to (v3);
\draw[-] (v1) to  [out=-135,in=135] (v5);
\draw[-] (v1) to (v6);
\draw[-] (v2) to (v4);
\draw[-] (v2) to (v5);
\draw[-] (v2) to (v6);
\draw[-] (v3) to  [out=-45,in=45] (v6);
\draw[-] (v3) to (v4);
\draw[-] (v3) to (v5);
\draw[-] (v3) to (v6);
\draw[-] (v4) to (v5);
\draw[-] (v4) to (v6);
\draw[-] (v5) to (v6);
\end{tikzpicture}}
\subfigure[$R_x(1)=R_x(2)=2$]{\begin{tikzpicture}[scale=0.85]
\node[vertex,minimum size=5pt,color=black,text=black] (v1) at (-1,1.5)  {};
\node[vertex,minimum size=5pt,color=black,text=black] (v2) at (-1,0)  {};
\node[vertex,minimum size=5pt,color=black,text=black] (v3) at (1,1.5)  {};
\node[vertex,minimum size=5pt,color=black,text=black] (v4) at (1,0)  {};
\node[vertex,minimum size=5pt,color=black,text=black] (v5) at (-1,-1.5) {};
\node[vertex,minimum size=5pt,color=black,text=black] (v6) at (1,-1.5) {};
\node [font=\large] at (-1.75,1.5) {$v_{f_{1,1},u_1}$};
\node [font=\large] at (1.75,1.5) {$v_{f_{2,2},u_1}$};
\node [font=\large] at (-2.35,0) {$v_{f_{1,1},u_2}$};
\node [font=\large] at (2.35,0) {$v_{f_{2,2},u_2}$};
\node [font=\large] at (-1.95,-1.5) {$v_{f_{1,1},u_{1,2}}$};
\node [font=\large] at (1.95,-1.5) {$v_{f_{2,2},u_{1,2}}$};
\draw[-] (v1) to (v2);
\draw[-] (v1) to  [out=-135,in=135] (v5);
\draw[-] (v2) to (v5);
\draw[-] (v3) to  [out=-45,in=45] (v6);
\draw[-] (v3) to (v4);
\draw[-] (v3) to (v6);
\draw[-] (v4) to (v6);
\end{tikzpicture}}
%\subfigure[]{\includegraphics[trim=0cm 6cm 0cm 5cm, width=0.35\textwidth]{./Figs/Exp_3U_1f_multicast}}
\caption{ Conflict graphs of a physical network with two users $u_1$ and $u_2$ and two chunks
  $f_1$ and $f_2$, operating with a multicast traffic pattern.}
\label{fig:example_conflict_graph}
\end{figure}

\vspace{-2mm}
\subsection{Characterizing the SSP}
\label{sec:characterizing-stabg}
As a reminder, we restrict our characterization to edge-based conflict graphs.  We provide
characterization analysis for the conflict graphs generated in the prior subsection for finite I/O
access bandwidth systems; characterizations for infinite I/O systems are extremely similar.  

 Unless otherwise stated, we do not allow multipacket reception so $ R_{x}(j) = 1, \,
\, \forall j$.  In multipacket reception systems we use the multicast traffic pattern.  The
framework we have setup thus far provides rapid analysis of many storage systems with various
traffic patterns.  See Table
\ref{tab:networkScenarios} for scenarios for which we have characterized the associated conflict
graphs, and by extension their associated \acp{SSP}.

\begin{table}[tb]
  \centering
  \caption{\ac{QCN} models with finite I/O: Traffic patterns and their associated conflict graph
    properties.  }
\vspace{-5mm}
  \label{tab:networkScenarios}
\bigskip
\ra{1.3}
  \begin{tabular}{@{}l c l @{}}
    \toprule
    \textbf{Traffic Pattern} & \textbf{Claw-free} & \textbf{Notes} \\
   \midrule
Single unicast & Yes& Perfect (Lem.~\ref{lem:INFIO_SU_B_BSU})\\
Broadcast & Yes& Perfect (Lem.~\ref{lem:INFIO_SU_B_BSU}) \\
   Broadcast and single unicast &Yes& Perfect (Lem.~\ref{lem:INFIO_SU_B_BSU})\\
Multiple unicast & Yes& Quasi-line (Lem.~\ref{lem:INFIO_MU})\\
Multicast&No& (Lem.~\ref{lem:multicast_notClawFree})\\
   Broadcast and multiple unicast&No& (Cor.~\ref{cor:Broadcast_MU_notClawFree})\\
$\geq T$-multipacket reception &Yes&Quasi-line (Thm.~\ref{thm:INFIO_MPR_QuasiLine})\\
$<T$-chunk multipacket reception && Unknown\\
   \bottomrule
  \end{tabular}
\end{table}

Given a multicast traffic pattern, the general conflict graph is not claw-free, making exact
\ac{SSP} characterization challenging and motivating the exploration of more restrictive traffic patterns.  
\begin{lem}
\label{lem:multicast_notClawFree}
Given a \ac{QCN} model operating with a multicast traffic pattern, with
greater than two users and greater than two chunks, then
  the associated conflict graph is not guaranteed to be claw-free.
\end{lem}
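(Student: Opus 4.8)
The plan is to establish this negative statement by exhibiting one explicit \ac{QCN} model that meets the hypotheses and whose finite-I/O conflict graph contains an induced claw, i.e.\ four vertices in which one (the center) is adjacent to the other three (the leaves) while the three leaves are pairwise nonadjacent. One example suffices because the claim is only that claw-freeness is \emph{not guaranteed}.

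First I fix a layout with three users $u_1,u_2,u_3$ and three chunks: drive $D_1$ stores $f_1$, drive $D_2$ stores $f_2$, and $f_3$ is placed on any drive --- $f_3$ exists only to meet the ``more than two chunks'' requirement and plays no role below. Work in a timeslot in which every chunk is innovative for every user, so that all vertices named below are legitimate. Following the vertex-naming of Fig.~\ref{fig:example_conflict_graph}, set the three leaves to be $\ell_j = v_{f_{1,1},u_j}$ for $j\in\{1,2,3\}$ (chunk $f_1$ read from $D_1$ and multicast to user $u_j$ alone) and the center to be $c = v_{f_{2,2},u_{1,2,3}}$ (chunk $f_2$ read from $D_2$ and multicast to all three users); all four are generated by the multicast rule for finite I/O in Sec.~\ref{sec:conflict-graph}.

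The substance of the proof is checking the six adjacencies of $\{c,\ell_1,\ell_2,\ell_3\}$ against the validity constraints of Sec.~\ref{sec:constraints}. For the pair $(c,\ell_j)$: activating both modes makes user $u_j$ receive $f_1$ from $D_1$ and $f_2$ from $D_2$, so the left side of~(\ref{eq:MPRConstraint}) for that user equals $2>1=R_x(j)$; hence $c$ is adjacent to each $\ell_j$. For the pair $(\ell_j,\ell_{j'})$ with $j\neq j'$: both modes read the single chunk $f_1$ from the single virtual drive $D_1$ and deliver it to the disjoint singletons $\{u_j\}$ and $\{u_{j'}\}$, so the combined assignment reads $f_1$ exactly once, making (\ref{eq:QSSConstraint_kthDriveAllowsOneReadPerTimeslot}) hold as $1\le 1$, delivers one chunk to each of $u_j,u_{j'}$, so (\ref{eq:MPRConstraint}) holds, and the multicast constraint reads $2\le N=3$; no storage or link constraint is violated, so $\ell_j$ and $\ell_{j'}$ are nonadjacent. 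As these are the only four vertices considered, the subgraph induced on $\{c,\ell_1,\ell_2,\ell_3\}$ is precisely $K_{1,3}$, a claw; thus this conflict graph is not claw-free, and since enlarging a \ac{QCN} model only introduces new vertices and never adds edges among these four, the same four vertices remain an induced claw in any larger model containing the layout.

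I expect the only real subtlety to lie in the second batch of checks: one must resist the reflex of calling two multicast vertices that read the same chunk from the same drive adjacent, and instead verify from the constraints that they describe a single physical read fanned out to disjoint destinations, hence compatible. Getting the adjacency rule right --- a conflict arises only when the user sets intersect, or when the two vertices name the same virtual drive but distinct chunks --- is where an error would most naturally creep in. A secondary point worth stating is why the center must reach all three users under the multicast pattern: with exactly three users in the system, a center whose destination set omitted some $u_j$ would not be adjacent to $\ell_j$, so the bound $\vert\mathcal U\vert\ge 3$ in the hypothesis is precisely what the construction consumes.
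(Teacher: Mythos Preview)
Your construction fails because the three leaves $\ell_j = v_{f_{1,1},u_j}$ are \emph{pairwise adjacent} in the paper's conflict graph, not independent. The ``reflex'' you try to resist is in fact the operative convention: two multicast vertices that name the same (chunk, drive) pair are distinct states of the same hyperedge and therefore always conflict, regardless of whether their user sets are disjoint. This is exactly what Fig.~\ref{fig:example_conflict_graph}(a) shows---$v_{f_{1,1},u_1}$ and $v_{f_{1,1},u_2}$ are joined by an edge---and what the accompanying text states (``the vertices that correspond to the same drive conflict with each other''). Your constraint-by-constraint check of~(\ref{eq:MPRConstraint}),~(\ref{eq:QSSConstraint_kthDriveAllowsOneReadPerTimeslot}) and the multicast inequality overlooks this: the union of the two assignments is indeed a valid mode, but it is the mode represented by the \emph{single} vertex $v_{f_{1,1},\{u_j,u_{j'}\}}$, not a simultaneous activation of two distinct conflict-graph vertices. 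With the leaves forming a triangle, your four vertices induce $K_4$, not $K_{1,3}$.

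The paper's counterexample avoids this by placing the three unicast leaves on \emph{distinct} chunk--drive pairs and sending them to distinct users, so that no two leaves share a hyperedge, a virtual drive, or a receiver; the center is taken to be a broadcast vertex $v_{f_{i,k},u_\forall}$, which is adjacent to every leaf through~(\ref{eq:MPRConstraint}) since its user set meets each singleton. Your choice of center $c=v_{f_{2,2},u_{1,2,3}}$ would serve equally well; only the leaf placement needs to change. Note this is also where the hypothesis ``greater than two chunks'' is actually consumed: three pairwise non-conflicting leaves require three distinct chunk--drive pairs, not one active chunk with a spare $f_3$ sitting idle as in your layout.
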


\begin{proof}

  Suppose the conflict graph is claw-free.  Consider the counter example given by
  Fig.~\ref{fig:counterExampleToClaw}.   Given there are $ T\geq 3$ chunks in our target file and
  $ N\geq 3$ users, consider the broadcast traffic pattern.  Owing to the lack of multipacket reception,
  the set of vertices $ \{v_{f_{i,k},u_{\forall}} \}_{i=1}^{T}$ form a single clique, where we use the simplifying notation $ u_{\forall} =
  \{u_{1},\ldots, u_{N}\}$.  For a given chunk $ f_{i,k}$, the set of vertices $
  \{ v_{f_{i,k},u_{j}} \}_{i=1}^{N}$ form a clique, as do vertices $ \{ v_{f_{i,k},u_{j}} \}_{i=1}^{T}$ for a given user
  $u_{j}$.  In Fig.~\ref{fig:counterExampleToClaw} we depict each of these cliques as a rectangle.
  As highlighted in red with curved edges, the set of unique vertices $ \{
  v_{f_{i,1},u_{\forall}},v_{f_{i,1},u_{j}},v_{f_{i_{2,1}},u_{j_{2}}},v_{f_{i_{3,1}},u_{j_{3}}}  \}$ form a
  claw.  

\begin{figure}[tb]
  \centering
  \includegraphics[width=\linewidth]{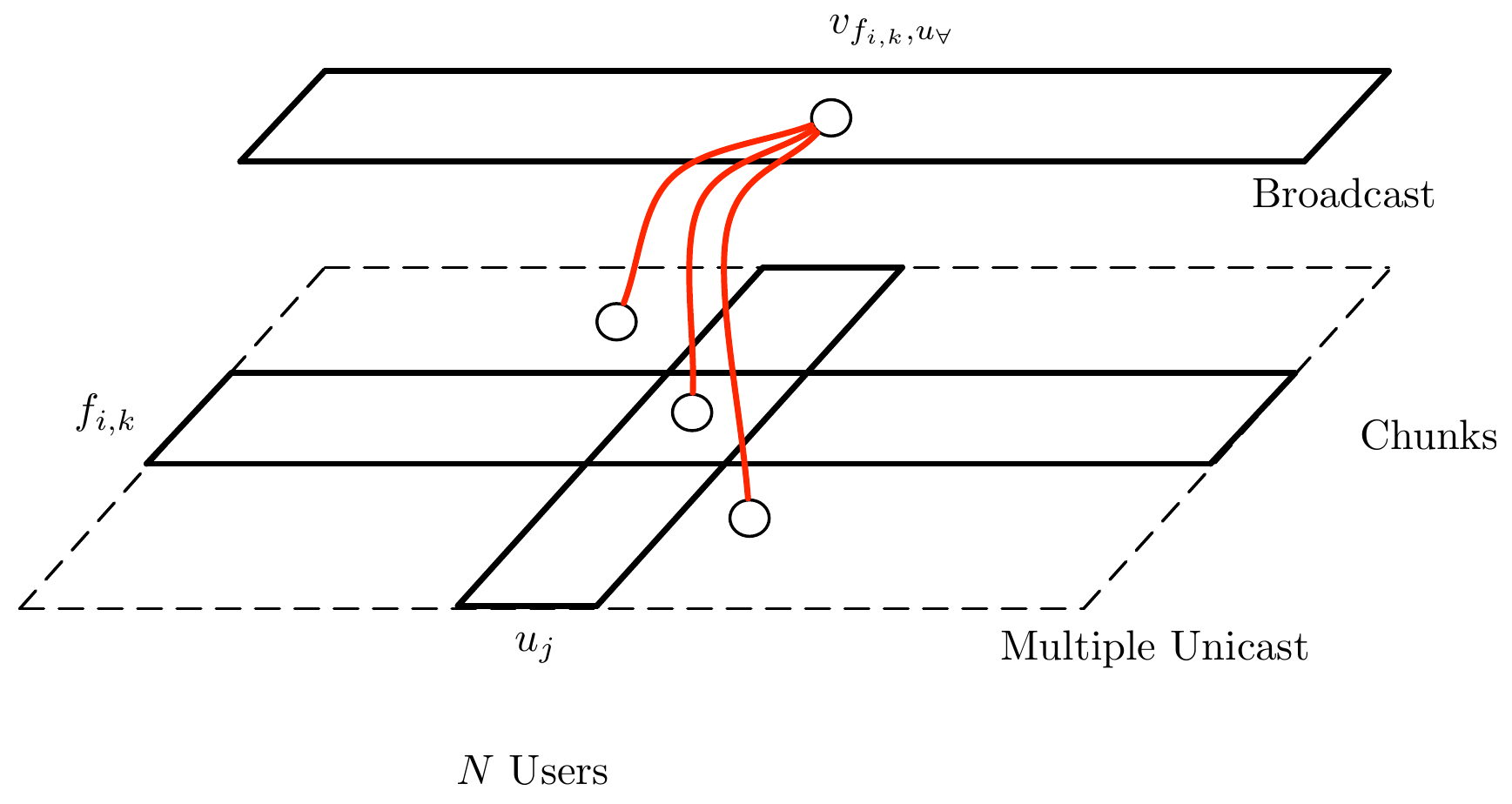}
  \caption{As per the proof of Lemma \ref{lem:multicast_notClawFree}, given a multicast traffic
    pattern, an example of a conflict graph with a claw.  A clique of vertices is depicted as a
    rectangle.  The graph depicts a subset of the overall conflict graph, illustrating broadcast and
    multiple unicast clique structure.}
  \label{fig:counterExampleToClaw}
\end{figure}

\end{proof}
The presented counterexample immediately generates an additional corollary: Systems
that allow both multiple unicast and broadcast traffic patterns are not guaranteed to admit claw-free conflict graphs.
\begin{cor}
\label{cor:Broadcast_MU_notClawFree}
Given a \ac{QCN} model operating with a traffic pattern that allows either
broadcast or multiple unicast, with greater than two users and greater than two chunks,  then the associated
conflict graph is not guaranteed to be claw-free.
\end{cor}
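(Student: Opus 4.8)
The plan is to observe that the counterexample already built in the proof of Lemma~\ref{lem:multicast_notClawFree} lives entirely inside the ``broadcast or multiple unicast'' regime, so the corollary is essentially immediate and requires no fresh construction. First I would recall that a traffic pattern permitting \emph{either} broadcast or multiple unicast generates, in the finite I/O conflict graph of Sec.~\ref{sec:conflict-graph}, both a broadcast vertex $v_{f_{i,k},u_{\forall}}$ and the single-destination vertices $\{v_{f_{i,k},u_{j}}\}_{j=1}^{N}$ for every nonzero entry $\mathcal{N}(i,k)$; no multicast-specific vertices are invoked in the claw of Fig.~\ref{fig:counterExampleToClaw}.

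Second, I would revisit Fig.~\ref{fig:counterExampleToClaw} and check that the four vertices forming the claw there --- the broadcast vertex $v_{f_{i,1},u_{\forall}}$ together with the three single-destination vertices $v_{f_{i,1},u_{j}}$, $v_{f_{i_{2},1},u_{j_{2}}}$, $v_{f_{i_{3},1},u_{j_{3}}}$ --- are all of these two allowed types, and that the edges among them are exactly those dictated by the storage and link constraints of Sec.~\ref{sec:constraints}: the broadcast vertex conflicts with each single-destination vertex because both require a read on a common virtual drive in the same timeslot (constraint~(\ref{eq:QSSConstraint_kthDriveAllowsOneReadPerTimeslot})), while the three single-destination vertices can be made pairwise nonadjacent by placing them on distinct chunks read from distinct virtual drives and destined for distinct users, so that no drive-capacity or link constraint is jointly violated. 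The induced subgraph on these four vertices is then a vertex with three pairwise disconnected neighbors, i.e.\ a claw, and this forces the same cardinality hypotheses used in Lemma~\ref{lem:multicast_notClawFree}, namely at least three users and at least three chunks (or chunk copies) to seat the three leaves; this is precisely ``greater than two users and greater than two chunks.'' Exhibiting one admissible chunk-to-drive layout under this traffic pattern whose conflict graph contains such an induced claw establishes that the conflict graph is not guaranteed to be claw-free.

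The only real subtlety, and the step I would be most careful about, is verifying that the three leaf vertices are genuinely pairwise nonadjacent in the ``broadcast or multiple unicast'' conflict graph rather than merely in the multicast one --- one must confirm that no link constraint from Sec.~\ref{sec:constraints} (in particular no single-unicast-style restriction that might be active) forces an edge among them, which is why the leaves are chosen on distinct chunks, distinct virtual drives, and distinct users. With such a choice the induced subgraph on the four vertices is exactly a claw, and the argument closes.
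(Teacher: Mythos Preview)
Your approach is the same as the paper's---reuse the claw from Fig.~\ref{fig:counterExampleToClaw} and observe that every vertex in it is either a broadcast vertex or a single-destination (multiple-unicast) vertex---and that is exactly what the one-line proof in the paper does.

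There is, however, an internal inconsistency in your execution that you should repair. You copy the paper's labels, which place \emph{all four} vertices on virtual drive~$1$, and then argue that the three unicast leaves are pairwise nonadjacent ``by placing them on distinct chunks read from distinct virtual drives.'' You cannot have both. If the three leaves really sit on the same virtual drive~$1$ with distinct chunk indices $i,i_2,i_3$, then in the finite~I/O setting every pair of them violates constraint~(\ref{eq:QSSConstraint_kthDriveAllowsOneReadPerTimeslot}) (one virtual drive cannot read two distinct chunks in a timeslot), so the leaves are pairwise adjacent and the induced subgraph is not a claw. Conversely, once you move the leaves to three distinct virtual drives (which is what you actually need), your stated reason for the center--leaf edges---``both require a read on a common virtual drive''---no longer applies to the two leaves that are off drive~$1$. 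The correct justification for those adjacencies is the link constraint~(\ref{eq:MPRConstraint}) with $R_x(j)=1$: the broadcast vertex $v_{f_{i,1},u_{\forall}}$ occupies every user's single reception slot, so it conflicts with any unicast vertex regardless of which drive that unicast uses. With the leaf labels corrected to distinct virtual drives and the center--leaf adjacency attributed to~(\ref{eq:MPRConstraint}), the claw stands and your argument is complete.
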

\begin{proof}
As per the counterexample shown in Fig.~\ref{fig:counterExampleToClaw}.
\end{proof}
We now consider systems with restricted traffic patterns, including single unicast, broadcast, and broadcast with
single unicast.  
\begin{lem}
\label{lem:INFIO_SU_B_BSU}
Suppose a \ac{QCN} model operates  with a traffic pattern of either single unicast,
broadcast, or broadcast and single unicast.  Then its associated conflict graph is
  claw-free, perfect, and with stability number equal to one.
\end{lem}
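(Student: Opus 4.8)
The plan is to prove the stronger structural fact that, under any one of these three traffic patterns together with the standing assumption $R_{x}(j)=1$, the associated conflict graph $G$ is a \emph{complete} graph; claw-freeness, perfectness, and $\alpha(G)=1$ then all follow at once. First I would pin down the vertex set: for broadcast the vertices are the all-users vertices $v_{f_{i,k},u_{\forall}}$, one for each non-zero entry $\mathcal{N}(i,k)$; for single unicast they are the vertices $v_{f_{i,k},u_{j}}$; and for broadcast and single unicast the vertex set is the union of the two families. In every case a vertex encodes a single \emph{transmission event}: one chunk $f_{i}$ read off one virtual drive $D_{k}$ and delivered either to every user (broadcast) or to exactly one user $u_{j}$ (unicast).

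The core step is to check that no two distinct transmission events can occur in the same timeslot, i.e.\ that every pair of distinct vertices is adjacent. For two single-unicast vertices this is immediate from the single-unicast constraint $\sum_{i,j,k}\mathcal{M}_{m}(i,j,k)\mathcal{N}(i,k)\le 1$, since activating both would push the left-hand side to at least $2$. For two distinct broadcast vertices $v_{f_{i_{1},k_{1}},u_{\forall}}$ and $v_{f_{i_{2},k_{2}},u_{\forall}}$ we have $(i_{1},k_{1})\neq(i_{2},k_{2})$ with $\mathcal{N}(i_{1},k_{1})=\mathcal{N}(i_{2},k_{2})=1$; activating both forces $\mathcal{M}_{m}(i_{1},1,k_{1})=\mathcal{M}_{m}(i_{2},1,k_{2})=1$, so $\sum_{i,k}\mathcal{M}_{m}(i,1,k)\mathcal{N}(i,k)\ge 2 > 1 = R_{x}(1)$, contradicting (\ref{eq:MPRConstraint}). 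For the mixed case I would argue that a broadcast mode must occupy all $N\ge 2$ output lines whereas a single-unicast mode occupies at most one line, so a broadcast vertex and a single-unicast vertex can never be jointly active, and this ``one line versus $N$ lines'' clash persists even when the two vertices name the same $(i,k)$ pair. Hence $G$ is complete in all three patterns.

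Finishing is routine: on a finite nonempty vertex set a complete graph has $\alpha(G)=1$ because no two vertices are non-adjacent; it is claw-free because a claw would supply three pairwise non-adjacent vertices, none of which exist; and it is perfect because every induced subgraph is again complete, so its chromatic number equals its number of vertices equals its clique number, matching the definition of a perfect graph given earlier.

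I expect the only genuinely delicate point to be the mixed broadcast/single-unicast case, where one must rule out a broadcast vertex and a single-unicast vertex ``agreeing'' on the same chunk and drive and therefore being mistakenly regarded as compatible; the line-count argument above is the cleanest resolution, as it avoids any appeal to innovativeness or degree-of-freedom bookkeeping. A minor caveat worth noting in passing is that when $N=1$ the three patterns coincide and the mixed case is vacuous, so the argument may be stated for $N\ge 2$ there without loss of generality.
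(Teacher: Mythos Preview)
Your proposal is correct and matches the paper's approach: both argue that under each of the three traffic patterns the conflict graph is a single clique, from which claw-freeness, perfectness, and $\alpha(G)=1$ follow at once. The paper's proof is terser---it simply asserts the clique structure and cites the absence of multipacket reception for the mixed case---whereas you spell out the specific constraint violations and flag the same-$(i,k)$ subtlety that the paper leaves implicit.
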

\begin{proof}
First, consider a single unicast traffic pattern.  The set of associated conflict graph vertices $ \{
v_{f_{i,k},u_{j}} \}_{i,j,k}$ form a single clique.  Second, consider a broadcast traffic pattern.
The set of associated conflict graph vertices $ \{ v_{f_{i,k},u_{\forall}}\}_{i,k} $ again form a
single clique.  Third, consider the broadcast or single unicast traffic pattern.  The
associated conflict graph has one clique from the broadcast and another from the single unicast traffic
pattern.  Owing to there being no multipacket reception, these two cliques are both fully connected.  Cliques
have stability number of one and are an example of perfect graphs.  
\end{proof}
\begin{lem}
  \label{lem:INFIO_MU}
Suppose a \ac{QCN} model operates with a multiple unicast traffic pattern.  Then
its associated conflict graph is a quasi-line graph.
\end{lem}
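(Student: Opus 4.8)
The plan is to verify the definition of a quasi-line graph directly: for every vertex $v$ of the conflict graph I will exhibit a partition of its closed neighbourhood $N[v]$ into two cliques. The first step is to pin down exactly which pairs of distinct vertices are adjacent under a multiple unicast traffic pattern with $R_x(j)=1$ for all $j$. Among the constraints of Sec.~\ref{sec:constraints}, the only ones that couple two vertices are: the per-user reception bound (\ref{eq:MPRConstraint}), which forbids a single user from receiving two chunks in a timeslot; the per-drive read bound (\ref{eq:QSSConstraint_kthDriveAllowsOneReadPerTimeslot}), which forbids a drive from being read for two different chunks; and the multiple unicast constraint, which forbids one chunk being delivered from the same drive to two different users (equivalently, forbids two output lines being fed by the same virtual drive). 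Translating these into the conflict-graph construction, I claim that two distinct vertices $v_{f_{i_1,k_1},u_{j_1}}$ and $v_{f_{i_2,k_2},u_{j_2}}$ are adjacent \emph{if and only if} $j_1=j_2$ or $k_1=k_2$. The ``if'' direction is the case analysis just described (same user violates (\ref{eq:MPRConstraint}); same drive violates either (\ref{eq:QSSConstraint_kthDriveAllowsOneReadPerTimeslot}) or the multiple unicast constraint, according to whether the chunk indices agree); the ``only if'' direction holds because if the users differ \emph{and} the drives differ, then simultaneously activating both links respects every mode constraint, so no edge is drawn.

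Granting this edge characterization, fix an arbitrary vertex $v=v_{f_{i,k},u_j}$ and define $A=\{\,v_{f_{i',k'},u_{j}} : \mathcal{N}(i',k')=1\,\}$ (all vertices carrying user $u_j$, including $v$) and $B=\{\,v_{f_{i',k},u_{j'}} : \mathcal{N}(i',k)=1,\ j'\in\{1,\dots,N\}\,\}$ (all vertices using drive $D_k$, including $v$). Any two vertices of $A$ share the user $u_j$, hence are adjacent by the ``if'' direction, so $A$ is a clique; likewise $B$ is a clique. Since $A,B\subseteq N[v]$ and, by the ``only if'' direction, every neighbour of $v$ shares either the user or the drive of $v$ and therefore lies in $A\cup B$, we get $N[v]=A\cup B$. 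Consequently $N[v]=A\ \sqcup\ (B\setminus A)$ is a partition of the closed neighbourhood into two cliques (the second being a subset of the clique $B$). As $v$ was arbitrary, the conflict graph is a quasi-line graph.

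The only step requiring genuine care is the edge characterization, and in particular not overlooking the multiple unicast constraint: two vertices agreeing in both chunk index and drive index but differing in user are still adjacent, which is precisely what makes $B$ a single clique rather than a union of cliques over the users. I would spell that sub-case out explicitly and double-check the ``only if'' direction against the precise statement of the multiple unicast pattern (distinct output lines must draw from distinct virtual drives). The remainder is bookkeeping; note finally that the same argument, applied to the infinite-I/O vertex set $\{v_{f_i,u_j}\}$, yields the analogous statement there, so nothing is lost by carrying out the proof for the finite-I/O model.
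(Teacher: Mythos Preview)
Your proposal is correct and follows essentially the same approach as the paper: both identify that every vertex lies in a ``same-drive'' clique and a ``same-user'' clique whose union is the closed neighbourhood, hence the graph is quasi-line. Your write-up is in fact more careful than the paper's, which argues informally (with reference to a figure) and does not explicitly separate the ``if'' and ``only if'' directions of the adjacency characterization, nor single out the role of the multiple unicast constraint in making the same-drive vertex set a clique across \emph{all} users.
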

\begin{proof}
In constructing the conflict graph, there exist two constraints of note.  First, each virtual drive
can make up to one transmission, as per (\ref{eq:QSSConstraint_kthDriveAllowsOneReadPerTimeslot}).
Second, each user can receive up to one unicast chunk, as per the multiple unicast traffic pattern.
Consider conflict graph vertex $ v_{f_{i,k},u_{S}}$.  It is a member of two cliques: first, across all
chunks on drive $D_{k}$ and across all $N$ users.  We illustrate this as the horizontal clique
panels in Fig.~\ref{fig:multiunicast_multiChunksPerDrive}.  Second, given user set $ u_{S}$, all
drives and chunks form another clique.  We illustrate this as the vertical clique panels in
Fig.~\ref{fig:multiunicast_multiChunksPerDrive}.  Not all cliques in
Fig.~\ref{fig:multiunicast_multiChunksPerDrive} need be of the
same size.  There are no other connections or conflicts between vertices.  
The conflict graph is then a quasi-line graph.  
\end{proof}

\begin{figure}[tb]
  \centering
  \includegraphics[width=0.9\linewidth]{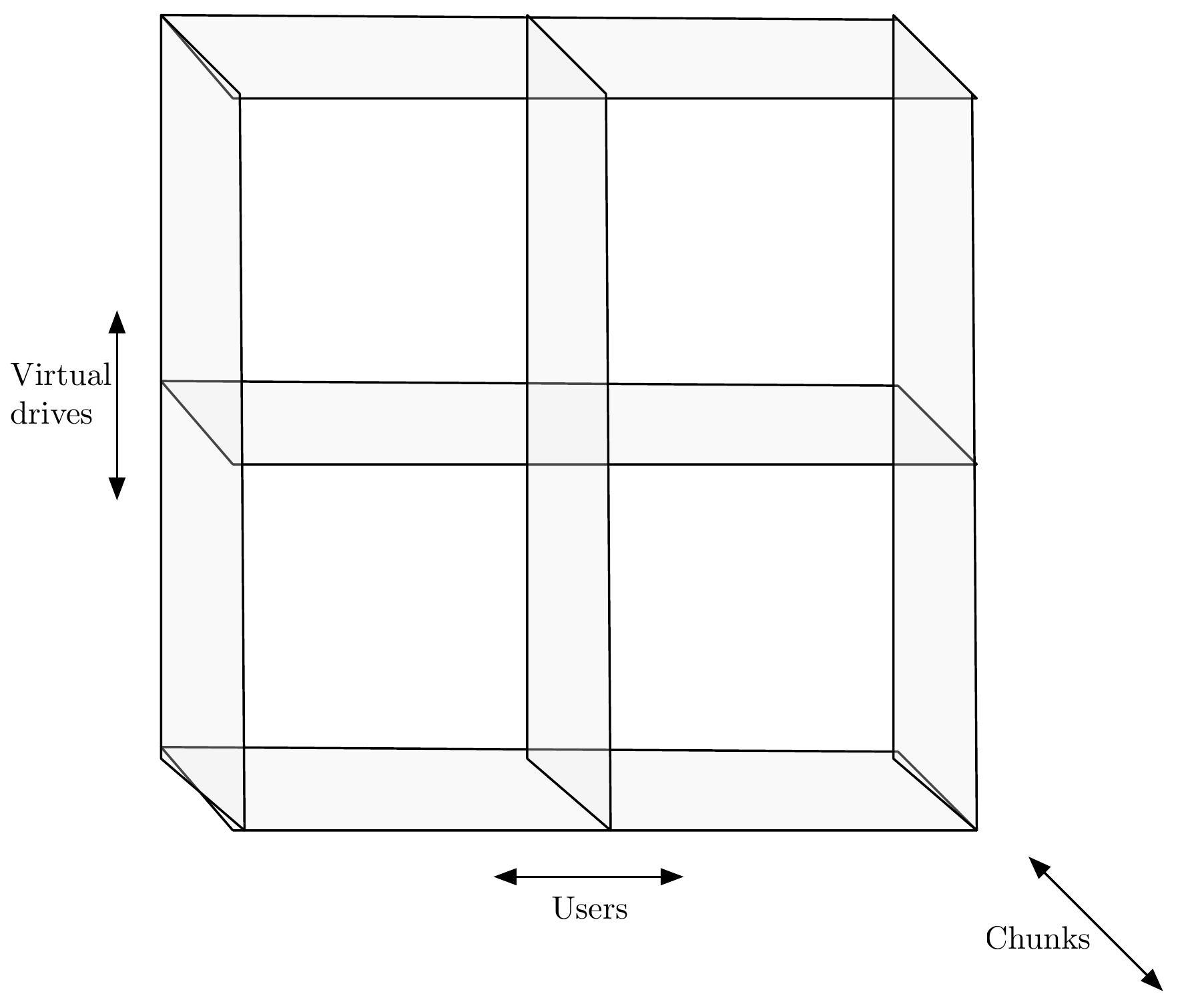}
  \caption{Simplified visualization of a conflict graph associated with a finite I/O \ac{QCN} model
    operating with a multiple unicast traffic pattern.  Each horizontal panel represents a clique
generated by virtual drive transmissions, and each vertical panel represents a clique generated by
user traffic pattern restrictions.}
  \label{fig:multiunicast_multiChunksPerDrive}
\end{figure}

\begin{thm}
\label{thm:INFIO_MPR_QuasiLine}
  Suppose a \ac{QCN} model operates with any traffic pattern, and all users have
  at least $T$-chunk multipacket reception.  Then its associated conflict graph is a quasi-line graph.  
\end{thm}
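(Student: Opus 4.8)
The plan is to show that imposing $\ge T$-chunk multipacket reception makes the user-side reception constraint vacuous, so that the conflict graph collapses into a disjoint union of cliques --- one clique per virtual drive, or a single clique in the single-unicast case --- which is trivially a quasi-line graph.

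First I would observe that constraint~(\ref{eq:MPRConstraint}) is never active under the hypothesis. For any user $u_j$ and any valid mode $\mathcal{M}_m$, the quantity $\sum_{i,k}\mathcal{M}_m(i,j,k)\,\mathcal{N}(i,k)$ counts the chunks delivered to $u_j$ in that timeslot; since the system holds only $T$ distinct chunks and a user never receives a non-innovative (in particular, a duplicate) chunk within a timeslot, this sum is at most $T\le R_x(j)$. Hence no edge of the conflict graph can owe its existence to~(\ref{eq:MPRConstraint}). (I would flag, and dispose of, the degenerate possibility of a user being handed the same chunk from two different drives in one slot: the innovativeness bookkeeping encoded in $\mathcal{S}$ rules this out, or it can be excluded without loss of generality.)

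Next I would enumerate the remaining sources of conflict between two vertices $v_{f_{i,k},u_{\mathcal{S}}}$ and $v_{f_{i',k'},u_{\mathcal{S}'}}$: (i) the conflict-graph construction of Sec.~\ref{sec:confl-graph-constr} forces an edge whenever the vertices are distinct states of the same hyperedge, i.e.\ $(i,k)=(i',k')$ but $\mathcal{S}\ne\mathcal{S}'$; (ii) the single-read constraint~(\ref{eq:QSSConstraint_kthDriveAllowsOneReadPerTimeslot}) forces an edge whenever $k=k'$ and $i\ne i'$; (iii) the active traffic-pattern constraint from Sec.~\ref{sec:constraints}. A short case check over the four patterns shows that, once~(\ref{eq:MPRConstraint}) is slack, (iii) never introduces an edge between vertices sitting on different virtual drives: under multicast, broadcast, or multiple unicast, reads issued by distinct drives never collide, whereas under single unicast every pair of (necessarily singleton-$\mathcal{S}$) vertices collides. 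Combining (i)--(iii), the vertex set of each drive $D_k$, namely $\{v_{f_{i,k},u_{\mathcal{S}}}\colon \mathcal{N}(i,k)=1\}$, is a clique, and there are no edges across drives --- except in the single-unicast case, where all these cliques merge into one. This is exactly the structure already visible in Fig.~\ref{fig:example_conflict_graph}(b), and, with the user cliques of Fig.~\ref{fig:multiunicast_multiChunksPerDrive} contracted to points, it specializes that picture.

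Finally I would conclude: a graph each of whose connected components is complete is claw-free and quasi-line, because the closed neighborhood of any vertex is precisely the clique containing it, and that clique is partitioned into the two cliques ``itself'' and $\emptyset$. This also recovers Lemma~\ref{lem:INFIO_SU_B_BSU} (single unicast and broadcast) and strengthens Lemma~\ref{lem:INFIO_MU} as a consistency check. The only genuinely laborious step is the traffic-pattern bookkeeping in (iii): carefully verifying, for every admissible traffic pattern, that no inter-drive edge survives the removal of~(\ref{eq:MPRConstraint}), and tying off the harmless degenerate case of duplicate in-slot delivery. Everything after that is essentially the definition of a quasi-line graph.
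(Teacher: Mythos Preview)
Your proposal is correct and follows the same approach as the paper: once the reception constraint~(\ref{eq:MPRConstraint}) is seen to be vacuous under $\ge T$-chunk MPR, the conflict graph is a disjoint union of cliques and hence quasi-line. Your version is in fact slightly more careful --- you correctly group the vertices into one clique per virtual drive (the paper's proof says one clique per chunk--drive pair $(i,k)$, which glosses over the edges forced by constraint~(\ref{eq:QSSConstraint_kthDriveAllowsOneReadPerTimeslot}) when a drive stores several chunks), and you carry out the traffic-pattern case analysis explicitly, including the single-unicast degeneracy --- but the underlying structure of the argument is identical.
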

\begin{proof}
Consider chunk $ f_{i,k}$.  The vertices generated in the conflict graph associated with $ f_{i,k}$ are
a function of the traffic pattern used and the drive service unit constraints.  Regardless of traffic pattern particulars, all
vertices associated with $ f_{i,k}$ will
form a clique owing to drive constraints, i.e., for any $ v_{f_{i,k},u_{S_{1}}}$ and $ v_{f_{i,k},u_{S_{2}}}$, they must be connected, where $
S_{1},S_{2} \subseteq \{1, \ldots, N \}$.  Crucially, however, owing to multipacket reception, unique
vertices $v_{f_{i,k},u_{S_{1}}}$ and $ v_{f_{j,l},u_{S_{2}}}$, $ i \neq j, \, k \neq l$ are not connected.  The conflict
graph is then a set of disjoint cliques, where each clique is associated with a single chunk on a
single drive.  This set of disjoint cliques is a quasi-line graph as each closed neighborhood can be partitioned into the union of two cliques.  
\end{proof}

We also point out a connection between claw- and \textit{net-free} conflict graphs and \ac{QCN} models.   In
graph theory, the study of claw-free graphs is often associated with claw- and net-graphs, which are both claw-  and net-free.  A net is illustrated in Fig.~\ref{fig:netEx}, which is
formed by starting with a triangle and adding to each vertex a new vertex.  More is known
about claw- and net-free graphs than only claw-free graphs.  In the generation of our conflict graphs we explicitly do not consider null or do-not-transmit
vertices, as per the our hyperedges no including the emptyset.  We point out that if do-not-transmit vertices are included in conflict graph generation,
then those conflict graphs are not guaranteed to be net-free.
\begin{figure}[tb]
  \centering
  \includegraphics[width=0.32\linewidth]{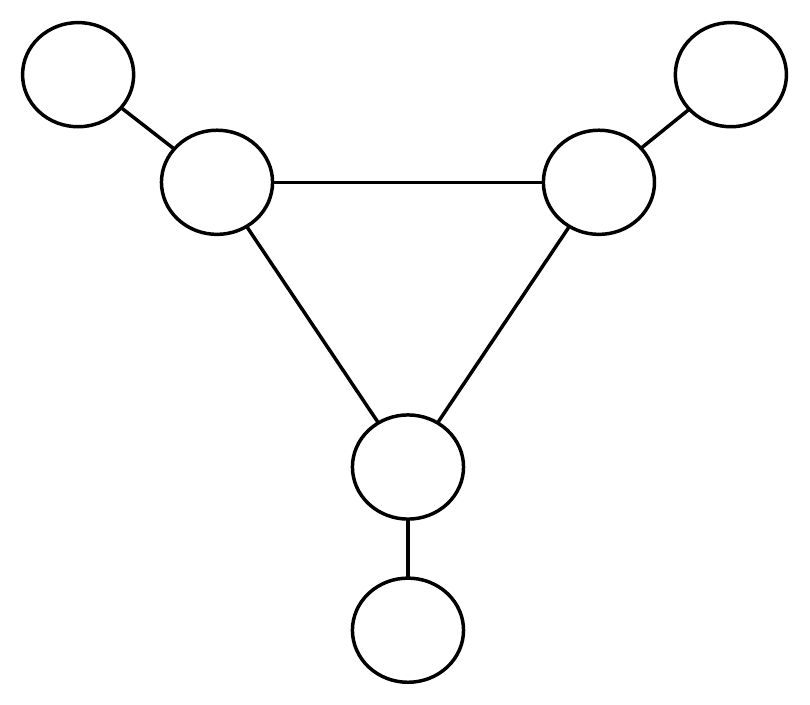}
  \caption{Net graph illustration.  It is formed by starting with a triangle and adding a new
vertex to each original vertex.}
  \label{fig:netEx}
\end{figure}
\begin{lem}
\label{lem:NetFreeOrNot}
If do-not-transmit vertices are used in the construction of conflict graph $ G$, then $ G$ is not guaranteed to be 
net-free.  
\end{lem}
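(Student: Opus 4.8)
The plan is to exhibit one concrete \ac{QCN} instance whose conflict graph, once do-not-transmit vertices are included, contains an induced net; since Lemma~\ref{lem:NetFreeOrNot} only asserts that net-freeness is not \emph{guaranteed}, a single counterexample suffices. It is also worth noting for contrast that, by Lemma~\ref{lem:INFIO_SU_B_BSU}, the very same system without do-not-transmit vertices has a conflict graph that is a clique, hence trivially net-free, so the obstruction is caused purely by admitting the empty hyperedge state.

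First I would take the smallest natural candidate: a single physical drive $D_1$ with one service unit storing three chunks $f_1,f_2,f_3$, an arbitrary user set $\mathcal{U}$, single packet reception, and a broadcast traffic pattern. Under broadcast each stored chunk contributes exactly one ordinary conflict-graph vertex $v_{f_{i,1},u_{\forall}}$, and admitting the empty hyperedge state adds exactly one do-not-transmit vertex, which I write $n_i$, for each of the three chunk hyperedges.

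Next I would check the adjacencies of these six vertices. The vertices $v_{f_{1,1},u_{\forall}},v_{f_{2,1},u_{\forall}},v_{f_{3,1},u_{\forall}}$ are pairwise adjacent because broadcasting two distinct chunks would require two reads in one timeslot from the single-service-unit drive $D_1$, violating~(\ref{eq:QSSConstraint_kthDriveAllowsOneReadPerTimeslot}); hence they induce a triangle. Each $n_i$ is adjacent to $v_{f_{i,1},u_{\forall}}$, since these are two distinct states of the same hyperedge and a hyperedge takes only one state per timeslot. Finally, $n_i$ is adjacent to nothing else: assigning the empty state to the $f_i$-hyperedge schedules no activity and therefore cannot conflict with the state of any other hyperedge, so $n_i \not\sim n_j$ and $n_i \not\sim v_{f_{j,1},u_{\forall}}$ whenever $j \neq i$. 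The subgraph induced on $\{v_{f_{1,1},u_{\forall}},v_{f_{2,1},u_{\forall}},v_{f_{3,1},u_{\forall}},n_1,n_2,n_3\}$ is then exactly the net of Fig.~\ref{fig:netEx}, so $G$ is not net-free, which proves the claim.

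The only delicate point is the last step --- confirming that the do-not-transmit vertices attach as pendants and create no spurious edges among themselves or to the wrong active vertices; this rests entirely on the observation, immediate from the conflict-graph construction rule, that the empty state of a hyperedge never participates in a conflict. I would also remark that the same obstruction survives under the single unicast, multiple unicast, and multicast patterns, since the broadcast-with-single-drive configuration above embeds into each of them as an induced subgraph.
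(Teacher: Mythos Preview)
Your proposal is correct and follows essentially the same approach as the paper: both exhibit a net by taking three chunks on a single drive under broadcast, with the three transmit vertices forming the central triangle and the three do-not-transmit vertices attached as pendants. The only cosmetic difference is that the paper fixes $N=1$ with infinite I/O (so the triangle is forced by the single output line), whereas you fix a single service unit and allow arbitrary $\mathcal{U}$ (so the triangle is forced by the drive constraint~(\ref{eq:QSSConstraint_kthDriveAllowsOneReadPerTimeslot})); either source of conflict yields the same induced net.
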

\begin{proof}
  Consider a \ac{QCN} model with infinite I/O operating with a broadcast traffic pattern, with three
  chunks $ T=3$ on a virtual drive, and one user $ N=1$.
  Using do-not-transmit vertices for the conflict graph generation, we have three transmit vertices $ \{v_{f_{i,1},u_{1}} \}_{i=1}^{3}$ which
  form a central clique.  We then also generate three additional vertices $ \{
  v_{f_{i,1},u_{\emptyset}} \}_{i=1}^{3}$, where each vertex in this set denotes not transmitting
  chunk $ f_{i,1}$.  We then connect vertex pairs $ (v_{f_{i,1},u_{1}}, v_{f_{i,1},u_{\emptyset}})$
  with edges for each chunk, and we have a net graph.  
\end{proof}

Given a conflict graph whose \ac{SSP} can be characterized, as per Table
\ref{tab:networkScenarios}, we now describe how to
characterize the rate regions based on these \acp{SSP}.  

\vspace{-2mm}
\subsection{Characterizing the Rate Region}
\label{sec:char-rate-regi-1}

This subsection establishes the achievable rate region of \ac{QCN} systems in terms of their associated conflict graphs,
where the \ac{SSP} has been characterized.  We find the rate regions 
by adapting scheduling policy $ \pi_{0}$ from \cite{TasEph:92}---which was also adapted to become
offline scheduling Algorithm 1 from \cite{KimSunMedEryKot:11}---to operate on
\ac{QCN} models.  The
main difference between this paper and policy $ \pi_{0}$ from \cite{TasEph:92} is that our approach
also allows traffic pattern selection.  Further, in Algorithm 1 from
\cite{KimSunMedEryKot:11} each queue in $ \mathcal{Q}^{I} $ is served by a maximum of one vertex in
the conflict graph.  In this work, multiple vertices can service any particular queue.  As such, we
introduce two types of incidence vectors:
\begin{definition} \textit{($(f_{i,k},u_j)-$ incidence vector):}
An $(f_{i,k},u_j)-$incidence vector $\mathcal C^{i,k,j}$ is a $\{0,1\}-$vector of size $m$, $m$
being the total number of stable sets in the conflict graph, whose entries are  labeled with the
stable sets $S^\ell, \forall \ell\in \{1,\cdots,m\}$.  If $\chi^{S_\ell}({v_{f_{i,k},u_{\mathcal
      S\in \mathcal P_{\geq 1}(\mathcal U_A)}}})=1$ where  $j\in \mathcal S$, then $\mathcal C^{i,k,j}(\ell)=1$; otherwise $\mathcal C^{i,k,j}(\ell)=0$.
\end{definition}
\begin{definition} \textit{($(f_i,u_j)-$incidence vector):}
To obtain the $(f_i,u_j)-$incidence vector $\mathcal C^{i,j}$, we add up all $(f_{i,k},u_j)-$
incidence vectors $\forall k\in \{1,...,R\}$,
\begin{eqnarray}
\mathcal C^{i,j}=\sum_{k=1}^{R}{\mathcal C^{i,k,j}} \, .
\end{eqnarray}
\end{definition}

Recall that frame-based schedules can be specified by a sequence of $F$ switch configurations
such that the switch cycles through these configurations 
periodically.  These schedules are decided based on prior knowledge of the arrival rates of the
flows, and do not use the instantaneous queue size information to decide the switch
configuration. 

Our frame-based offline scheduling algorithm is shown in Algorithm \ref{alg:offline}, which is a modified version of Algorithm 1
in \cite{KimSunMedEryKot:11} and $ \pi_{0}$ in \cite{TasEph:92} using $ (f_{i},u_{j})$-incidence vectors. 

\begin{algorithm}
 \caption{Offline Scheduling Algorithm}
 \label{alg:offline}
   \begin{algorithmic}[1]
 \STATE Consider a traffic pattern with rate vector $\mathbf r$ and its conflict graph $ G$. We assume that $\mathbf r\in STAB( G)$. Then
 \begin{equation}\label{eq:offline}
 r_{i,j}=\sum_{\ell=1}^{m} \phi_\ell \mathcal C^{i,j}(\ell), \quad \forall i,\forall j
 \end{equation}
 where $\sum_{\ell}\phi_\ell=1$ and $\phi_\ell\geq 0, \forall \ell$, $r_{i,j}\geq 0, \forall i, \forall j$. 
 \STATE Assuming all rates $r_{i,j}F$ and $\phi_\ell F$ are rational, choose $F$ such that $r_{i,j}F$ and $\phi_\ell F$ are integers for all $i, j, \ell$.
% \STATE Let $T_{i,j}$ be the total number of slots when chunk $f_i$ is read from any of the drives
% that store it and sent to any user set $u_{\mathcal S}$, where $j\in \mathcal S$. 
 \STATE For each $\ell$, use the switch configuration corresponding to $S^\ell$ for $\phi_\ell F$
 slots. If there are fewer than $r_{i,j}F$ requests in the queue, then serve all of
 them. Repeat step 3. 
   \end{algorithmic}
 \end{algorithm}

\begin{thm}
A \ac{QCN} model that follows Algorithm \ref{alg:offline} is
stable if and only if the operating point is within the rate region of the \ac{QCN} model.  
\end{thm}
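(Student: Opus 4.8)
The plan is to prove the two directions of the equivalence separately; the forward direction is essentially definitional, while the reverse direction is the throughput-optimality claim and is where the work lies. \emph{Necessity.} Suppose the \ac{QCN} driven by Algorithm~\ref{alg:offline} with operating point (arrival rate vector) $\mathbf r$ is stable. Then Algorithm~\ref{alg:offline} is itself a schedule that serves $\mathbf r$ while keeping every queue in $\mathcal Q^I$ stable, so $\mathbf r$ is achievable in the sense of Sec.~\ref{sec:preliminaries}, and hence lies in the \ac{RR} of the \ac{QCN} by definition. Nothing beyond the definitions is needed for this direction.

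\emph{Sufficiency, setup.} Suppose $\mathbf r$ lies in the \ac{RR}. By the discussion of Sec.~\ref{sec:STAB} (the arguments of \cite{RosBam:07,SunDebMed:07}), for the infinite-buffer \ac{QCN} the \ac{RR} coincides with $STAB(G)$, the \ac{SSP} of the associated conflict graph $G$; hence $\mathbf r \in STAB(G)$ and Algorithm~\ref{alg:offline} is well defined on it. Since $STAB(G)$ is the convex hull of the incidence vectors of the stable sets of $G$, $\mathbf r$ is dominated componentwise by a convex combination $\sum_\ell \phi_\ell \xi^{S^\ell}$ of stable-set incidence vectors (Carath\'eodory bounds the number of nonzero $\phi_\ell$), and absorbing any slack into a ``do-nothing'' portion of the schedule yields exactly Eq.~(\ref{eq:offline}), $r_{i,j} = \sum_\ell \phi_\ell \mathcal C^{i,j}(\ell)$, which is step~1 of the algorithm. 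Under the rationality hypothesis of step~2 we then pick a frame length $F$ for which every $\phi_\ell F$ and every $r_{i,j}F$ is an integer; a point in the interior of $STAB(G)$ can always be replaced by an arbitrarily close rational point with no loss, and a boundary point is covered by the stated rationality assumption.

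\emph{Sufficiency, stability argument.} It then remains to run the standard frame-based stability argument, exactly as in the proof of throughput optimality of policy $\pi_0$ in \cite{TasEph:92} and of Algorithm~1 in \cite{KimSunMedEryKot:11}. Within one frame, configuration $S^\ell$ is used for $\phi_\ell F$ slots, and under $S^\ell$ each slot clears $\mathcal C^{i,j}(\ell)$ requests from queue $q_{f_i,u_j}$; summing over $\ell$ gives a guaranteed per-frame service of $F\sum_\ell \phi_\ell \mathcal C^{i,j}(\ell)=F r_{i,j}$ for that queue. Since the mean number of arrivals into $q_{f_i,u_j}$ over a frame is also $F r_{i,j}$, the queue-length process sampled at frame boundaries has non-positive drift, and a Foster--Lyapunov / Loynes-type argument shows every queue in $\mathcal Q^I$ is stable, i.e., the \ac{QCN} is stable.

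\emph{Main obstacle.} The one genuinely new point, and the place I would concentrate, is the claim that under $S^\ell$ each slot clears exactly $\mathcal C^{i,j}(\ell)$ requests from $q_{f_i,u_j}$. Unlike Algorithm~1 of \cite{KimSunMedEryKot:11}, where each queue is touched by at most one conflict-graph vertex, here a single stable set may contain several vertices $v_{f_{i,k},u_{\mathcal S}}$ for the same chunk $f_i$ but different virtual drives $D_k$ (or different user subsets $\mathcal S \ni j$), so one must verify --- using the validity constraints of Sec.~\ref{sec:constraints}, in particular $\mathcal M_m(i,j,k)\le \mathcal S(i,j,k)$ and the per-drive constraint~(\ref{eq:QSSConstraint_kthDriveAllowsOneReadPerTimeslot}) --- that these vertices jointly effect precisely $\mathcal C^{i,j}(\ell)$ distinct admissible departures from $q_{f_i,u_j}$ rather than a smaller number of redundant ones, so that the bookkeeping in Eq.~(\ref{eq:offline}) is consistent with the actual queue dynamics. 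Once this accounting is pinned down, the remainder is the standard frame-based throughput-optimality proof and carries over essentially verbatim.
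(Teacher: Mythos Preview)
Your proposal is correct and follows the same approach as the paper: both argue that the convex decomposition in Eq.~(\ref{eq:offline}) guarantees that the time-averaged service rate of each queue $q_{f_i,u_j}$ is at least its arrival rate $r_{i,j}$, which yields stability via a frame-based (Birkhoff--von~Neumann style) schedule. In fact, your treatment is considerably more careful than the paper's own proof, which is a one-paragraph sketch that omits the necessity direction entirely and does not address the ``main obstacle'' you correctly flag---namely, that $\mathcal C^{i,j}(\ell)$ may exceed one because several vertices in $S^\ell$ (for distinct virtual drives $D_k$) can serve the same queue $q_{f_i,u_j}$, so one must check these constitute distinct departures; the paper simply takes this for granted.
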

\begin{proof}
To prove that under the offline algorithm the queues $q_{f_i,u_j}, \forall i, \forall j$
are stable, it is enough to show that the average service rate of queue $q_{f_i,u_j}$ is always
greater that or equal to the arrival rate of flow $(f_i,u_j)$.  
Essentially,  \eqref{eq:offline} expresses the rate $r_{i,j}$ as a convex combination of the stable
sets, which in turn leads to a switch schedule.  This is similar to switch schedules generated via
the Birkhoff-von Neumann \cite{McKMekAnaWal:99} theorem.  From \eqref{eq:offline}, it can be seen that on
average the summation of the fraction of times allocated to each of the stable sets guarantees that
the service time of each queue $q_{f_i,u_j}, \forall i, \forall j$ is at least equal to the arrival
rate of requests to that queue.   
  \end{proof}

We note that there exist online scheduling algorithms, which do not require knowledge of
incoming traffic statistics, which can be applied to our \ac{QCN} model.  For instance,
\cite{TasEph:92} provides an online scheduling policy that achieves
maximum throughput if the arrivals into queues are i.i.d. and independent across incoming flows.
Since this property holds in our \ac{QCN} model, this online policy can be directly applied, in which
case the weight assigned to each vertex in the conflict graph is the sum of all queue lengths of the
ingress queues to which it is associated.

\vspace{-2mm}
\section{The Effect of Coded Storage}
\label{sec:effect-coded-storage}

Given an uncoded \ac{QCN} model, we show how to generate an associated coded \ac{QCN}
model, and then compare the rate regions of these two systems.  Since analyzing the rate region of
coded storage is nontrivial, we develop an upper bound on its rate region instead.  

The upper bound for the coded rate region is generated as follows.  For
each coded chunk $ f_{i}^{c}$ stored on $D_{k}$ (which is a linear combination of chunks $\{ f_{l}
\}_{l \in g(i) } $ in its
generation), from each ingress buffer in the set $ \{ q_{f_{l},u_{j}} \}_{j=1}^{N}$, add a labeled
edge or link to  sink $ q_{u_{j}}$ labeled $D_{k}$.  See Fig.~\ref{fig:QSSBasicToCodedEx} as a
simple single user
example, where in the uncoded physical system two unique chunks are stored on unique drives.  In the coded
physical system, $f_1^{c} = f_{1} + f_{2}$, and $f_2^{c} = a_{1}f_{1} + a_{2}f_{2}$, so we add two edges to the
 coded \ac{QCN} model, as compared to the uncoded \ac{QCN} model.  In the general case, in
 constructing this upper bound we assume that any  request in an ingress buffer is successfully serviced by
sending any coded chunk that has the requested chunk in its generation, and that no penalty is paid
for the user needing to wait to decode the chunk of interest by receiving sufficient degrees of freedom.  

The upper bound is equivalent to setting all elements of the user's file knowledge matrix
$\mathcal{S}$ equal to one, in all states.  This upper bound can be met if coefficient cycling is performed, where
coefficient cycling is the dynamic updating or refreshing of coefficients in a coded chunk such that
every degree of freedom request can be served by any coded chunk, as first defined in
\cite{FerMedSol:12}.  Coefficient cycling guarantees reading of chunks from drives
without replacement.  If chunks are read without replacement within the same generation,
then we can then directly apply Algorithm \ref{alg:offline}.  Similar to \cite{AceDebMedKoe:05},
which considers the achievability of coded storage in systems with an infinite number of storage
nodes, we consider achievability given sufficient storage space and chunk layouts on drives.  Since
it is unclear whether or not coefficient cycling is feasible in all systems, we
show that the coded storage upper bound is achievable if all drives store
sufficient coded information, even without dynamic coding.\footnote{Without loss of
    generality, we assume each coded chunk is unique and therefore, multisets are not required.}
\begin{lem}
  Suppose a \ac{QCN} model operates with a unicast traffic pattern.  The coded storage upper bound is achievable if, for every $ f_{i}^{c}$ stored on drive $ D_k$, then
  drive $D_{k}$ also stores at least $s-1$ additional unique coded chunks from the same generation, $ \{  f_{l}^{c} \colon l\in
  g(i), \, l\neq i \}$.  
\end{lem}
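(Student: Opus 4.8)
The plan is to exhibit an actual coded scheme and schedule that realise the upper bound. Recall that this bound is, by construction, the \ac{RR} of the coded \ac{QCN} model (the uncoded model with the extra generation-labelled links added), equivalently the \ac{RR} obtained by forcing $\mathcal{S}(i,j,k)\equiv 1$; and since the traffic pattern is unicast rather than multicast, this \ac{RR} coincides with the \ac{SSP} of the coded conflict graph. By the discussion preceding the lemma, it therefore suffices to show that, under the stated layout, chunks can always be read ``without replacement within the same generation'' along the schedule produced by Algorithm~\ref{alg:offline} on the coded \ac{QCN} model, for then Algorithm~\ref{alg:offline} and its stability theorem apply verbatim and stabilise every operating point inside the \ac{SSP}, i.e., achieve the upper bound.

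First I would fix a virtual drive $D_k$, the generation $g(i)$ of some coded chunk $f_i^c$ it stores, and a user $u_j$, and let $D_k$ maintain the set of coded chunks of $g(i)$ it has already transmitted to $u_j$. Whenever Algorithm~\ref{alg:offline} activates a link from an ingress queue $q_{f_l,u_j}$ with $l\in g(i)$ to the sink $q_{u_j}$ carrying the label $D_k$, let $D_k$ read and transmit any coded chunk $f_{l'}^c$, $l'\in g(i)$, that it stores and has not yet sent to $u_j$. Because coded chunks are assumed unique across the system, chunks delivered to $u_j$ by distinct drives are automatically distinct, so combined with the per-drive record every such delivery is a fresh degree of freedom for $u_j$; this also respects the standing assumption that no user receives a replica coded chunk of a generation before decoding it, and under a unicast pattern with single-packet reception $u_j$ receives at most one chunk per timeslot, so two deliveries for $g(i)$ to $u_j$ never conflict within a slot.

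Next I would bound how many such deliveries the schedule can ask for. By the $(\alpha,s)$ \ac{MDS} property, once $u_j$ has received $s$ distinct coded chunks of $g(i)$ it decodes the whole generation; at that instant every queue $q_{f_l,u_j}$ with $l\in g(i)$ becomes satisfied, so no further link from any of those queues to $q_{u_j}$ is ever activated. Hence over the entire run the scheduler asks $D_k$ to supply at most $s$ distinct coded chunks of $g(i)$ to $u_j$, and since $D_k$ stores at least $s$ of them ($f_i^c$ together with the $s-1$ additional coded chunks assumed), an unsent coded chunk of $g(i)$ is always available when $D_k$ is called upon. Consequently reading without replacement within a generation is always feasible on every drive, which is exactly what was needed.

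The step I expect to be the main obstacle is making the two informal reductions above rigorous in the presence of replication: one must show that the per-drive rule never produces a non-innovative delivery when several drives serve the same pair $(u_j,g(i))$ in different timeslots, and that the one-shot decoding of a generation is consistently reflected in the queue-draining accounting that underlies Algorithm~\ref{alg:offline}. Both rely on the uniqueness of coded chunks, on the standing assumption that every replica of a coded source chunk is itself coded, and on the unicast restriction that serialises deliveries to any fixed user; should the accounting resist a clean argument, I would instead route all degrees of freedom for a fixed pair $(u_j,g(i))$ through one designated drive, which is legitimate precisely because every such drive already holds at least $s$ coded chunks of $g(i)$ and can thus meet the entire requirement on its own.
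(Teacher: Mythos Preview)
Your argument is correct and rests on the same counting observation as the paper's proof: a user with an outstanding request for a chunk in generation $g(i)$ has received fewer than $s$ degrees of freedom, and since $D_k$ holds at least $s$ distinct coded chunks of that generation, it can always supply an innovative one. The difference is in framing. You take an operational route: run Algorithm~\ref{alg:offline} on the coded \ac{QCN} model, equip each drive with a per-user ``send a fresh chunk'' rule, and then argue that this rule never exhausts the drive's stock because the user decodes after at most $s$ deliveries. The paper instead argues directly at the level of the state matrix: it shows that whenever a request is pending, $\mathcal{S}(i,j,k)=1$ necessarily holds, so the upper-bound model (which assumes $\mathcal{S}\equiv 1$) is realised verbatim. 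The paper's version is slightly cleaner because it uses the global degree-of-freedom count and the \ac{MDS} property in one step, thereby bypassing the per-drive bookkeeping and the replication concern you flag as your main obstacle; under the global count, which drive previously served $u_j$ is irrelevant, since $D_k$'s $s$ chunks span the full generation and cannot all lie in the user's current $(<s)$-dimensional span. Your constructive version buys an explicit transmission rule, at the cost of the extra accounting.
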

\begin{proof}
The upper bound holds if all ingress buffers can be served by all valid modes across their connected edges, regardless of any user's state via $ \mathcal{S} $.   More specifically,
select any ingress buffer $ q_{u_{j},f_{i}} \in \mathcal{Q} ^{I}$ with connected edges labelled $
  D_{k}$.  For any fixed $(i,j,k)$, if $ \mathcal{S} (i,j,k)=1$ in any timeslot, then $ \mathcal{S} (i,j,k)=1$ must hold in
  all timeslots.  If there exists a read request in $ q_{u_{j},f_{i}}$, then user $ u_{j}$ has received less than $s$ innovative degrees of freedom.  Due to the
  unicast traffic pattern, since $ D_{k}$ stores at least $s$ unique coded chunks, then no matter which strict subset of chunks stored
on $ D_{k}$ $ u_{j}$ has already received, $ D_{k}$ can always serve a coded chunk that is innovative for $
u_{j}$.  Hence, $ u_{j}$ can receive an innovative chunk in every timeslot and if $ \mathcal{S}
(i,j,k)=1$ in any timeslot, then $ \mathcal{S} (i,j,k)=1$ in all timeslots.  
\end{proof}
\begin{lem}
  Suppose a \ac{QCN} model operates with any traffic pattern.  The coded storage upper bound is achievable if, for every $ f_{i}^{c}$ stored on drive $ D_k$, then
  drive $D_{k}$ also stores at least $s+N$ additional unique coded chunks from the same generation $ \{  f_{l}^{c} \colon l\in
  g(i), l\neq i \}$.
\end{lem}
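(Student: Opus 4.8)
The plan is to follow the template of the preceding lemma: the coded storage upper bound is achievable exactly when the offline schedule of Algorithm~\ref{alg:offline}, run on the relaxed model (the one with $\mathcal{S}\equiv 1$), can be \emph{realised} in the true coded system --- that is, when every mode the schedule selects is still a legal mode once the time-varying matrix $\mathcal{S}$ is accounted for. A mode that activates drive $D_k$, which stores coded chunks of generation $g(i)$, asks $D_k$ to read one coded chunk and forward it to a set $P_k$ of users; by the constraint $\mathcal{M}_m(i,j,k)\le\mathcal{S}(i,j,k)$ this is legal iff the chunk read is innovative for every $u_j\in P_k$ in that timeslot. So I would show that at every timeslot, whatever set $P_k$ of still-unsatisfied users a scheduled mode assigns to $D_k$, the drive can pick one of its stored coded chunks of $g(i)$ that is simultaneously innovative for all of $P_k$. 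Once this holds, no entry $\mathcal{S}(i,j,k)$ that is ever $1$ can drop to $0$, the conflict graph of the coded model coincides with that of the relaxed model, and the stability theorem already established for Algorithm~\ref{alg:offline} delivers achievability of the upper-bound rate region.

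The new difficulty relative to the unicast lemma is that under a general traffic pattern the single chunk read by $D_k$ is broadcast or multicast to up to $N$ users at once, so it must be innovative for all of them \emph{with the same chunk}. Write $R_j$ for the set of coded chunks of $g(i)$ that $u_j$ has already received from $D_k$. Any $u_j\in P_k$ still holding a pending request has received fewer than $s$ degrees of freedom of $g(i)$ from all sources, so $|R_j|\le s-1$, and $|P_k|\le N$. A legal read exists iff $D_k$ stores a coded chunk of $g(i)$ outside $\bigcup_{j\in P_k}R_j$, so it suffices to keep $\bigl|\bigcup_{j\in P_k}R_j\bigr|$ strictly below the number $s+N+1$ of coded chunks of $g(i)$ that $D_k$ stores ($f_i^c$ together with the $s+N$ additional ones).

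Proving that bound is the step I expect to be the main obstacle, since the naive estimate $\bigl|\bigcup_{j\in P_k}R_j\bigr|\le N(s-1)$ is far too weak. The argument must use the extra freedom available: $D_k$ chooses \emph{which} coded chunk of $g(i)$ it reads in every timeslot, not just the current one, and the MDS property makes any not-yet-received coded chunk innovative for a needy user. I would operate $D_k$ to keep the received sets of the currently-needy users as aligned as possible --- roughly, when a mode serves only part of the needy users, $D_k$ reads, where possible, a coded chunk already held by the excluded, further-ahead users, keeping the family $\{R_j\}$ close to a chain --- and then prove by induction over timeslots that $\bigl|\bigcup_{j\in P_k}R_j\bigr|$ never reaches $s+N+1$; intuitively the $s$ covers one aligned group of needy users and the $+N$ absorbs the at most $N$ simultaneously-needy users each desynchronising by one coded chunk (through joining midstream, or through chunks of $g(i)$ received from other drives that $D_k$ cannot coordinate with). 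Pinning down the exact invariant that survives all these cases is the delicate part; granting it, a legal read always remains, every scheduled mode is realised, and with the reduction of the first paragraph this proves achievability. The preceding lemma is the special case $|P_k|=1$, where the bound is loose by $N$.
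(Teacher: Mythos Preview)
Your high-level reduction matches the paper's: both argue that the upper bound is achieved once one shows that, for every scheduled mode, drive $D_k$ can select a stored coded chunk of the relevant generation that is simultaneously innovative for every user being addressed, so that $\mathcal{S}(i,j,k)$ never drops from $1$ to $0$ and the relaxed conflict graph governs.

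Where you diverge is in how much you engage with the combinatorial core. The paper's proof is a short contradiction: it supposes some $\mathcal{S}(i,j,k)=0$ for a user in a set $N_k$, notes each user in $N_k$ has received fewer than $s$ degrees of freedom, and then simply asserts that ``the scheduler can then read one of the additional $N$ coded chunks, since the drive stores $s+N$ coded chunks from the same generation,'' and that continuing ``this process'' until all users are decoded yields a contradiction. No chunk-selection policy, no alignment strategy, and no inductive invariant appear; the paper never confronts the observation you make, that the crude estimate $\bigl|\bigcup_{j\in P_k}R_j\bigr|\le N(s-1)$ is far too weak to close the argument.

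So your proposal is less an alternative route than an attempt to supply what the paper leaves tacit. The alignment idea --- have $D_k$, when a mode serves only a subset of the needy users, prefer a chunk already held by the excluded ones so that the family $\{R_j\}$ stays close to a chain --- and the plan to carry an inductive bound on $\bigl|\bigcup_j R_j\bigr|$ across timeslots are entirely your own; nothing of the sort is in the paper. If you were hoping the published proof would help you pin down the ``delicate'' invariant, it will not: the paper's argument stops exactly at the point you flagged as the main obstacle.
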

\begin{proof}
  The upper bound holds if all ingress buffers can be served by all their connected edges and valid modes,
  regardless of $ \mathcal{S}$.  Consider the ingress queues $ q_{u_{j},f_{i}} \in \mathcal{Q} ^{I}$ with
  connected edges labeled $D_{k}$.  Suppose there exists a subset of users $ N_{k} \subseteq \{
1, \ldots , N \}$ for which user state  $ \mathcal{S} (i,j,k)=1 , \, \, \forall j \in N_{k}$ in some timeslot,
and $ \exists j\in N_{k}$ such that $ \mathcal{S} (i,j,k)=0$ in another timeslot.  Consider the timeslot in
which $ \exists j\in N_{k}$ such that $ \mathcal{S} (i,j,k)=0$.  This implies $D_{k}$ contains no coded chunk that
is innovative for all users $\{ u_{j} \colon j \in N_{k} \}$.  By definition of $ \mathcal{S}$, all users in $ N_{k}$ are in a state where they have
received less than $s$ innovative coded chunks.  However, the scheduler can then read one of the additional $N$ coded
chunks, since the drive stores $ s+N$ coded chunks from the same generation.  We can continue this process for all timeslots until all users have received $s$
degrees of freedom, so we have a contradiction.
\end{proof}

Note that the uncoded system rate region is a lower bound for the coded storage.   See Sec.~\ref{sec:numerical-results}
for examples that compare the uncoded \ac{RR} with the coded \ac{RR} upper bound.  

\begin{figure}[tb]
  \centering
  \includegraphics[width=\linewidth]{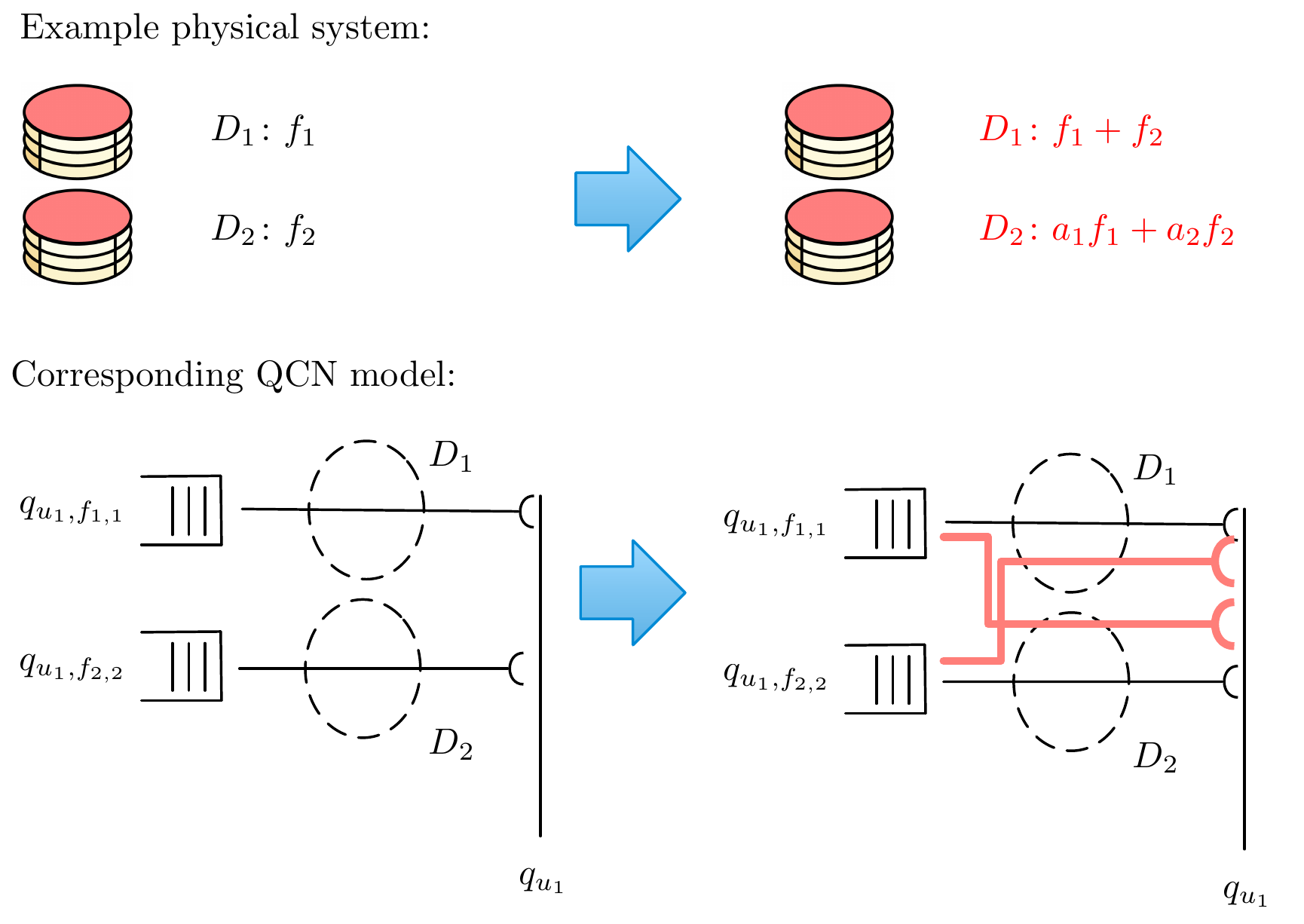}
  \caption{Example of the intuition behind generating, from an uncoded \ac{QCN} model, an upper
    bound for the coded
    \ac{QCN} equivalent.  In the coded physical network example, two chunks are now coded, and so two
    additional links are added into the coded \ac{QCN} model.  Intuitively, coding increases the
    number of links in the \ac{QCN} system, so a scheduler has more scheduling combinations when routing
    requests to drives.}
  \label{fig:QSSBasicToCodedEx}
\end{figure}

\section{Examples \& Numerical Results}
\label{sec:numerical-results}

This section walks the reader through the generation of an uncoded system's \acf{RR} under a variety
of traffic patterns.  All examples are of small storage systems to allow for ease of presentation.
We then compare the uncoded \ac{RR} and coded storage \acp{RR} upper bound.  

\subsection{Uncoded \ac{RR} Examples}
\label{sec:simple-acrr-examples}
This subsection walks the reader through simple \ac{RR} computation examples using our conflict
graph and \ac{SSP} approach.  

\textbf{Ex.~1, one chunk, two users under multicast:} Consider a system with two users $u_1$ and $u_2$ and a single chunk $f_1$ stored
on drive $D_1$, under a multicast traffic pattern. Assuming arrival rate of $r_{i,j}$ for requests of $f_i$ from user
$u_j$, $i=1, j=1,2$, and a general multicast scenario, the conflict graph is shown in
Fig.~\ref{fig:Exp1}.  In this conflict graph, three stable sets can be found, where the incidence
vectors corresponding to these stable sets are as follows
\begin{eqnarray}
\chi^{S_1}=[1,0,0 ], \,\chi^{S_2}=[0,1,0 ], \, \chi^{S_3}=[0,0,1 ] \, .
\end{eqnarray}
The first, second and third elements of these incidence vectors correspond to vertices
$v_{f_{1,1},u_1}$, $v_{f_{1,1},u_2}$, and $v_{f_{1,1},u_{1,2}}$, respectively. Based on the above
incidence vectors, the $(f_{1,1},u_1)-$ and $(f_{1,1},u_2)-$incidence vectors, $\mathcal C^{1,1,1}$
and $ \mathcal C^{1,1,2}$, can be expressed as
\begin{eqnarray}
\mathcal C^{1,1,1}=[1,0,1 ], \, \mathcal C^{1,1,2}=[0,1,1 ] \, .
\end{eqnarray}
Since there exists only one drive in the system, the $(f_1,u_1)$- and $(f_1,u_2)$-incidence vectors, $\mathcal C^{1,1}$ and $\mathcal C^{1,2}$, can be obtained as
\begin{eqnarray}
\mathcal C^{1,1}=\mathcal C^{1,1,1}=[1,0,1 ], \,\mathcal C^{1,2}=\mathcal C^{1,1,2}=[0,1,1 ] \, .
\end{eqnarray}
Using \eqref{eq:offline}, the following system of linear equations can be obtained
\begin{eqnarray}
&\phi_1+\phi_3=r_{1,1}, \, \phi_2+\phi_3=r_{1,2}, \, \phi_1+\phi_2+\phi_3=1 \, \nonumber\\
&r_{1,1}\geq 0,r_{1,2}\geq 0,\, \phi_1\geq 0, \phi_2\geq 0, \phi_3\geq 0 \, . 
\end{eqnarray}
The \ac{RR} corresponding to this system of linear equations is shown in Fig.~\ref{fig:Exp1}(b), with area of 1.

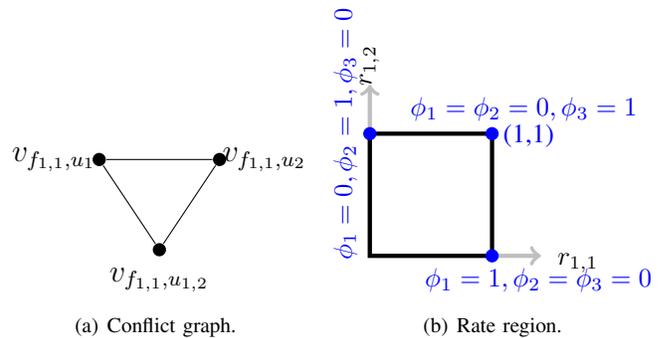
\begin{figure}
\centering
\subfigure[Conflict graph. ]{\begin{tikzpicture}[scale=0.8]
 \node[vertex,minimum size=5pt,color=black,text=black] (v1) at (-1,1.5)  {};
 \node[vertex,minimum size=5pt,color=black,text=black] (v2) at (1,1.5)  {};
 \node[vertex,minimum size=5pt,color=black,text=black] (v3) at (0,0)  {};
\node [font=\large] at (-1.75,1.5) {$v_{f_{1,1},u_1}$};
\node [font=\large] at (1.75,1.5) {$v_{f_{1,1},u_2}$};
\node [font=\large] at (0,-0.5) {$v_{f_{1,1},u_{1,2}}$};
 \draw[-] (v1) to (v3);
 \draw[-] (v2) to (v3);
 \draw[-] (v1) to (v2);
 \end{tikzpicture}}
\subfigure[Rate region.]{\begin{tikzpicture}[scale=0.65]
\draw [ultra thick, color=black!25, <->] (0,3.5) -- (0,0) -- (3.5,0);
\node [below right] at (3.65,0.25) {$r_{1,1}$};
\node [left,rotate=90] at (0,4.5) {$r_{1,2}$};
\draw [ultra thick,-] (0,2.5) -- (2.5,2.5) -- (2.5,0) -- (0,0) -- (0,2.5);
\tikzstyle{vertex}=[auto=left,circle,fill=blue!25,minimum size=20pt,inner sep=0pt]
%\node[vertex,minimum size=5pt,color=blue] (n1) at (0,0) {};
\node[vertex,minimum size=5pt,color=blue] (n1) at (2.5,0) {};
\node[vertex,minimum size=5pt,color=blue] (n1) at (0,2.5) {};
\node[vertex,minimum size=5pt,color=blue] (n1) at (2.5,2.5) {};
\node [rotate=90,text=black,color=blue!] at (-0.5,2.5) {$\phi_1=0,\phi_2=1,\phi_3=0$};
%\node [rotate=0,text=black,color=blue!] at (-0.5,-0.5) {$\phi_1=\phi_2=\phi_3=0$};
\node [rotate=0,text=black,color=blue!] at (3.45,-0.5) {$\phi_1=1,\phi_2=\phi_3=0$};
\node [rotate=0,text=black,color=blue!] at (3.15,3) {$\phi_1=\phi_2=0,\phi_3=1$};
\node [rotate=0,text=black,color=blue!] at (3.25,2.45) {(1,1)};
\end{tikzpicture}}
\caption{Conflict graph and \ac{RR} for Ex.~1, with two users and a single chunk, operating with a multicast
  traffic pattern.}
\label{fig:Exp1}
\end{figure}

 \textbf{Ex.~2, one chunk, two users under unicast}: Consider Ex.~1 under the unicast traffic pattern, as
 opposed to multicast. The updated conflict graph is  shown in Fig.~\ref{fig:Exp2}(a).  In this
 conflict graph, two stable sets can  be found and the incidence vectors corresponding to these
 stable sets are as follows
\begin{eqnarray}
\chi^{S_1}=[1,0 ], \, \chi^{S_2}=[0,1 ] \, .
\end{eqnarray}
The first and second elements of these incidence vectors correspond to vertices $v_{f_{1,1},u_1}$ and
$v_{f_{1,1},u_2}$, respectively.  Based on the above incidence vectors and the fact that there exists only one drive in the system, the $(f_1,u_1)$- and $(f_1,u_2)$-incidence vectors, $\mathcal C^{1,1}$ and $\mathcal C^{1,2}$, can be expressed as
\begin{eqnarray}
\mathcal C^{1,1}=\mathcal C^{1,1,1}=[1,0 ],\,  \,\mathcal C^{1,2}=\mathcal C^{1,2,1}=[0,1 ] \, .
\end{eqnarray}
By using \eqref{eq:offline}, the following system of linear equations can be obtained
\begin{eqnarray}
&\phi_1=r_{1,1}, \, \phi_2=r_{1,2}, \,\phi_1+\phi_2=1 \, \nonumber\\
&r_{1,1}\geq 0,r_{1,2}\geq 0, \, \phi_1\geq 0, \phi_2\geq 0 \, . 
\end{eqnarray}
The corresponding rate region is shown in Fig.~\ref{fig:Exp2}(b), with area of 1/2. 

\begin{figure}
\centering
\subfigure[Conflict graph. ]{\begin{tikzpicture}[scale=0.8]
 \node[vertex,minimum size=5pt,color=black,text=black] (v1) at (-0.5,3.5)  {};
 \node[vertex,minimum size=5pt,color=black,text=black] (v2) at (1,3.5)  {};
\node [font=\large] at (-1.25,3.5) {$v_{f_{1,1},u_1}$};
\node [font=\large] at (1.75,3.5) {$v_{f_{1,1},u_2}$};
 \draw[-] (v1) to (v2);
 \end{tikzpicture}}
\subfigure[ Rate region.]{\begin{tikzpicture}[scale=0.65]
\draw [ultra thick, color=black!25, <->] (0,3.5) -- (0,0) -- (3.5,0);
\node [below right] at (3.65,0.25) {$r_{1,1}$};
\node [left,rotate=90] at (0,4.5) {$r_{1,2}$};
\draw [ultra thick,-] (0,2.5) -- (2.5,0) -- (0,0) -- (0,2.5);
\tikzstyle{vertex}=[auto=left,circle,fill=blue!25,minimum size=20pt,inner sep=0pt]
%\node[vertex,minimum size=5pt,color=blue] (n1) at (0,0) {};
\node[vertex,minimum size=5pt,color=blue] (n2) at (2.5,0) {};
\node[vertex,minimum size=5pt,color=blue] (n3) at (0,2.5) {};
\node [rotate=90,text=black,color=blue!] at (-0.5,2.5) {$\phi_1=0,\phi_2=1$};
%\node [rotate=0,text=black,color=blue!] at (0.25,-0.5) {$\phi_1=0,\phi_2=0$};
\node [rotate=0,text=black,color=blue!] at (3.45,-0.5) {$\phi_1=1,\phi_2=0$};
\node [rotate=0,text=black,color=blue!] at (0.9,2.5) {(0,1)};
\node [rotate=0,text=black,color=blue!] at (3.05,0.5) {(1,0)};
\end{tikzpicture}}
\caption{Conflict graph and \ac{RR} for Ex.~2, with two users and a single chunk, operating with a unicast
  traffic pattern.}
\label{fig:Exp2}
\end{figure}

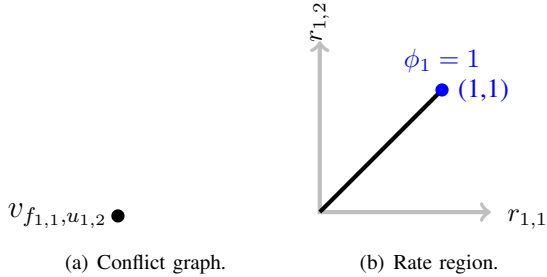
\begin{figure}
\centering
\subfigure[ Conflict graph.]{\begin{tikzpicture}[scale=0.8]
 \node[vertex,minimum size=5pt,color=black,text=black] (v1) at (-3.25,0.5)  {};
\node [font=\large] at (-4.25,0.5) {$v_{f_{1,1},u_{1,2}}$};
\node [font=\large] at (-0.5,0.5) {};
 \end{tikzpicture}}
\subfigure[Rate region.]{\begin{tikzpicture}[scale=0.65]
\draw [ultra thick, color=black!25, <->] (0,3.5) -- (0,0) -- (3.5,0);
\node [below right] at (3.65,0.25) {$r_{1,1}$};
\node [left,rotate=90] at (0,4.5) {$r_{1,2}$};
\draw [ultra thick,-]  (0,0) -- (2.5,2.5);
\tikzstyle{vertex}=[auto=left,circle,fill=blue!25,minimum size=20pt,inner sep=0pt]
%\node[vertex,minimum size=5pt,color=blue] (n1) at (0,0) {};
\node[vertex,minimum size=5pt,color=blue] (n2) at (2.5,2.5) {};
\node [rotate=0,text=black,color=blue!] at (3.35,2.5) {(1,1)};
\node [rotate=0,text=black,color=blue!] at ((2.5,3.1) {$\phi_1=1$};
\end{tikzpicture}}
\caption{Conflict graph and \ac{RR} for Ex.~3, with two users and a single chunk, operating with a broadcast
  traffic pattern.}
\label{fig:Exp3}
\end{figure}

 \textbf{Ex.~3, one chunk, two users under broadcast}: Consider Ex.~ 1 under the broadcast traffic pattern.
 This example shows that there are caveats in using \ac{RR} area/volume as a performance metric that should
 be carefully considered.  The conflict graph is shown in Fig.~\ref{fig:Exp3}(a),
 in which only one stable set exists,  with incidence vector
\begin{eqnarray}
\chi^{S_1}&=&[1] \, .
\end{eqnarray}
%The element of this incidence vector corresponds to vertex $v_{f_{1,1}u_{1,2}}$.    
Based on the above incidence vector, the $(f_1,u_1)$- and $(f_1,u_2)$-incidence vectors, $\mathcal C^{1,1}$ and
$\mathcal C^{1,2}$, are
\begin{eqnarray}
\mathcal C^{1,1}=\mathcal C^{1,1,1}=[1], \,\mathcal C^{1,2}=\mathcal C^{1,1,2}=[1] \, .
\end{eqnarray}
By using \eqref{eq:offline}, the following system of linear equations can be obtained
\begin{eqnarray}
&\phi_1=r_{1,1}, \,\phi_1=r_{1,2}, \, \phi_1=1 \nonumber\\
&r_{1,1}\geq 0,r_{1,2}\geq 0, \, \phi_1\geq 0 \, . 
\end{eqnarray}
The corresponding \ac{RR}  is shown in Fig.~\ref{fig:Exp3}(b), with area of 0.

\subsection{Comparison of Uncoded and Coded Storage \acp{RR}}
\label{sec:comp-uncod-coded}

This subsection compares the \ac{RR} areas of uncoded storage and the coded storage upper bound.
Our coded storage numerical examples use a striped file coded storage layout, which can be seen in \cite{FerMedSol:12},
and assume $ s/T \in \mathbb N_{+}$.  We first walk the reader through a very simple example that
demonstrates the increased scheduling flexibility allowed by coded storage.  We then summarize other
examples across various traffic patterns.  

\textbf{Ex.~4, uncoded, two chunks, one user with $\mathbf {R_{x}(1) = 2}$:}  Consider an uncoded system with one user $u_1$ and two chunks
$f_1$ and $f_2$ which are stored on drives $D_1$ and $D_2$, respectively.  Assuming arrival rate of
$r_{i,j}$ for request of $f_i$ from user $u_j$, $i=1,2, j=1$, and a
multicast with multipacket reception setting, the conflict graph is shown in
Fig~\ref{fig:Exp_two_one_uc_MU}(a).   In this conflict graph, three stable sets can be found, where the incidence vectors corresponding to these stable sets are as follows
\begin{eqnarray}
\chi^{S_1}=[1,0 ], \,\chi^{S_2}=[0,1 ], \,\chi^{S_3}=[1,1 ]\, .
\end{eqnarray}
The first and second elements of these incidence vectors correspond to vertices $v_{f_{1,1},u_1}$ and
$v_{f_{2,2},u_1}$, respectively.  Based on the above incidence vectors, the $(f_1,u_1)$- and
$(f_2,u_1)$-incidence vectors, $\mathcal C^{1,1}$ and $\mathcal C^{2,1}$, can be expressed  as
\begin{eqnarray}
\mathcal C^{1,1}=\mathcal C^{1,1,1}= [1,0,1 ],  \,\mathcal C^{2,1}=\mathcal C^{2,1,2}= [0,1,1 ] \, .
\end{eqnarray}
By using \eqref{eq:offline}, the following system of linear equations can be obtained
\begin{eqnarray}
&\phi_1+\phi_3=r_{1,1},\, \phi_2+\phi_3=r_{2,1},\,\phi_1+\phi_2+\phi_3=1\nonumber\\
&r_{1,1}\geq 0,r_{2,1}\geq 0,\, \phi_1\geq 0, \phi_2\geq 0, \phi_3\geq 0 \, . 
\end{eqnarray}
The \ac{RR} corresponding to the above system of linear equations is shown in Fig.~\ref{fig:Exp_two_one_uc_MU}(b), with area 1.

\begin{figure}
\centering
\subfigure[Uncoded conflict graph.]{\begin{tikzpicture}[scale=.8]
 \node[vertex,minimum size=5pt,color=black,text=black] (v1) at (-1,1.5)  {};
 \node[vertex,minimum size=5pt,color=black,text=black] (v2) at (1,1.5)  {};
\node [font=\large] at (-1.75,1.5) {$v_{f_{1,1},u_1}$};
\node [font=\large] at (1.75,1.5) {$v_{f_{2,2},u_1}$};
 \end{tikzpicture}}
\subfigure[Uncoded rate region.]{\begin{tikzpicture}[scale=0.65]
\draw [ultra thick, color=black!25, <->] (0,3.5) -- (0,0) -- (3.5,0);
\node [below right] at (3.65,0.25) {$r_{1,1}$};
\node [left,rotate=90] at (0,4.5) {$r_{2,1}$};
\draw [ultra thick,-] (0,2.5) -- (2.5,2.5) -- (2.5,0) -- (0,0) -- (0,2.5);
\tikzstyle{vertex}=[auto=left,circle,fill=blue!25,minimum size=20pt,inner sep=0pt]
%\node[vertex,minimum size=5pt,color=blue] (n1) at (0,0) {};
\node[vertex,minimum size=5pt,color=blue] (n1) at (2.5,0) {};
\node[vertex,minimum size=5pt,color=blue] (n1) at (0,2.5) {};
\node[vertex,minimum size=5pt,color=blue] (n1) at (2.5,2.5) {};
\node [rotate=90,text=black,color=blue!] at (-0.5,2.5) {$\phi_1=0,\phi_2=1,\phi_3=0$};
%\node [rotate=0,text=black,color=blue!] at (-1,-0.5) {$\phi_1=0,\phi_2=0,\phi_3=0$};
\node [rotate=0,text=black,color=blue!] at (3.4,-0.5) {$\phi_1=1,\phi_2=\phi_3=0$};
\node [rotate=0,text=black,color=blue!] at (3.25,3) {$\phi_1=\phi_2=0,\phi_3=1$};
\node [rotate=0,text=black,color=blue!] at (3.15,2.45) {(1,1)};
%\node [rotate=0,text=black,color=blue!] at (0.55,2.75) {(0,1)};
%\node [rotate=0,text=black,color=blue!] at (3.05,-0.5) {(1,0)};
\end{tikzpicture}}
\caption{Conflict graph and \ac{RR} for uncoded Ex.~4, with one user, two chunks, and multi-packet
  reception, $R_{x}(1)=2$.}
\label{fig:Exp_two_one_uc_MU}
\end{figure}
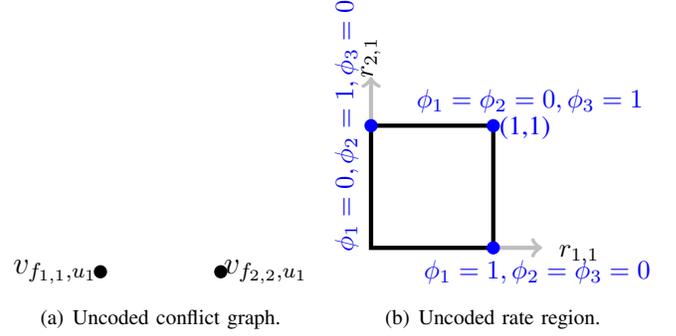

\textbf{Ex.~5, coded, two chunks, one user with $\mathbf {R_{x}(1) = 2}$:} Now consider Ex.~4 under a coded storage
system, where $D_1:f_1+f_2$ and $D_2:a_1f_1+a_2f_2$.   The conflict graph of the coded system is
shown in Fig~\ref{fig:Exp_two_one_c_MU}(a).   In this conflict graph, eight stable sets can be
found, where the incidence vectors corresponding to these stable sets are as follows
\begin{eqnarray}
\chi^{S_1}=[1,0,0,0 ], \,\chi^{S_2}=[0,1,0,0 ]\, \, \,\nonumber\\
\chi^{S_3}=[0,0,1,0 ],\,\chi^{S_4}=[0,0,0,1 ]\, \,  \,\nonumber\\
\chi^{S_5}=[1,1,0,0 ],\,\chi^{S_6}=[1,0,0,1 ]\, \, \, \nonumber\\
\chi^{S_7}=[0,1,1,0 ],\,\chi^{S_8}=[0,0,1,1 ]\, .
\end{eqnarray}
The first, second, third and fourth elements of these incidence vectors correspond to vertices
$v_{f_{1,1},u_1}$, $v_{f_{1,2},u_1}$, $v_{f_{2,1},u_1}$ and $v_{f_{2,2},u_1}$, respectively.
Furthermore, based on the above incidence vectors, the $(f_{1,1},u_1)-$ , $(f_{1,2},u_1)-$,
$(f_{2,1},u_1)-$ and $(f_{2,2},u_1)$-incidence vectors, $\mathcal C^{1,1,1}$, $\mathcal
C^{1,2,1}$,  $\mathcal C^{2,1,1}$ and $\mathcal C^{2,2,1}$, can be expressed as
\begin{eqnarray}
\mathcal C^{1,1,1}=[1,0,0,0,1,1,0,0 ],\,\mathcal C^{1,2,1}=[0,1,0,0,1,0,1,0 ]\,\,\,\nonumber\\
\mathcal C^{2,1,1}=[0,0,1,0,0,0,1,1 ],\,\mathcal C^{2,2,1}=[0,0,0,1,0,1,0,1 ] \, .
\end{eqnarray}
Therefore, the $(f_1,u_1)$- and $(f_2,u_1)$-incidence vectors, $\mathcal C^{1,1}$ and $\mathcal C^{2,1}$, are
\begin{eqnarray}
\mathcal C^{1,1}=\mathcal C^{1,1,1}+\mathcal C^{1,2,1}\,=[1,1,0,0,2,1,1,0]\,  \,\nonumber\\
\mathcal C^{2,1}=\mathcal C^{2,1,1}+\mathcal C^{2,2,1}=[0,0,1,1,0,1,1,2 ] \, .
\end{eqnarray}
Then, by using \eqref{eq:offline}, the following system of linear equations can be obtained
\begin{eqnarray}
&\phi_1+\phi_2+2\phi_5+\phi_6+\phi_7=r_{1,1}\nonumber\\
&\phi_3+\phi_4+\phi_6+\phi_7+2\phi_8=r_{2,1}\nonumber\\
&\phi_1+\phi_2+\phi_3+\phi_4+\phi_5+\phi_6+\phi_7+\phi_8=1\nonumber\\
&r_{1,1}\geq 0,r_{2,1}\geq 0,\, \phi_1\geq 0, \phi_2\geq 0, \phi_3\geq 0, \phi_4\geq 0, \nonumber\\ & \phi_5\geq 0, \phi_6\geq 0, \phi_7\geq 0,
\phi_8\geq 0 \, . 
\end{eqnarray}
The corresponding rate region can be obtained as shown in Fig.~\ref{fig:Exp_two_one_c_MU}(b), of area
2, whereas the uncoded equivalent has area of 1 (as shown in Fig.~\ref{fig:Exp_two_one_uc_MU}(b)).  

Intuition behind this increase is as follows.  Suppose two requests are in queue $
q_{f_{1,u_{1}}}$ and none are in $ q_{f_{2},u_{1}}$.   In this case, in the uncoded system servicing
this would take two timeslots.  In comparison, the coded system could service these two requests in
one timeslot.

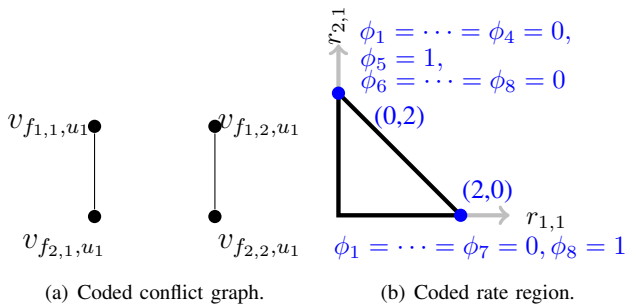
\begin{figure}[thb!]
\centering
\subfigure[Coded conflict graph.]{\begin{tikzpicture}[scale=0.8]
\node[vertex,minimum size=5pt,color=black,text=black] (v1) at (-1,1.5)  {};
\node[vertex,minimum size=5pt,color=black,text=black] (v2) at (1,1.5)  {};
\node[vertex,minimum size=5pt,color=black,text=black] (v3) at (-1,0)  {};
\node[vertex,minimum size=5pt,color=black,text=black] (v4) at (1,0)  {};
\node [font=\large] at (-1.75,1.5) {$v_{f_{1,1},u_1}$};
\node [font=\large] at (1.75,1.5) {$v_{f_{1,2},u_1}$};
\node [font=\large] at (-1.5,-0.5) {$v_{f_{2,1},u_1}$};
\node [font=\large] at (1.75,-0.5) {$v_{f_{2,2},u_{1}}$};
\draw[-] (v1) to (v3);
\draw[-] (v2) to (v4);
\end{tikzpicture}}
\subfigure[Coded rate region.]{\begin{tikzpicture}[scale=0.65]
\draw [ultra thick, color=black!25, <->] (0,3.5) -- (0,0) -- (3.5,0);
\node [below right] at (3.65,0.25) {$r_{1,1}$};
\node [left,rotate=90] at (0,4.5) {$r_{2,1}$};
\draw [ultra thick,-] (0,2.5) -- (2.5,0) -- (0,0) -- (0,2.5);
\tikzstyle{vertex}=[auto=left,circle,fill=blue!25,minimum size=20pt,inner sep=0pt]
%\node[vertex,minimum size=5pt,color=blue] (n1) at (0,0) {};
\node[vertex,minimum size=5pt,color=blue] (n2) at (2.5,0) {};
\node[vertex,minimum size=5pt,color=blue] (n3) at (0,2.5) {};
\node [rotate=0,text=black,color=blue!] at (1.25,2) {(0,2)};
\node [rotate=0,text=black,color=blue!] at (3.05,0.5) {(2,0)};
\node [rotate=0,text=black,color=blue!] at (2.6,3.75) {$\phi_1=\cdots=\phi_4=0,$};
\node [rotate=0,text=black,color=blue!] at (1.3,3.2) {$\phi_5=1,$};
\node [rotate=0,text=black,color=blue!] at (2.55,2.75) {$\phi_6=\cdots=\phi_8=0$};
%\node [rotate=0,text=black,color=blue!] at (-1,-0.5) {$\phi_1=\cdots=\phi_8=0$};
\node [rotate=0,text=black,color=blue!] at (2.9,-0.65) {$\phi_1=\cdots=\phi_7=0,\phi_8=1$};
\end{tikzpicture}}
\caption{Conflict graph and \ac{RR} for coded Ex.~5, with one user, two chunks, and multi-packet
  reception, $R_{x}(1)=2$.}
\label{fig:Exp_two_one_c_MU}
\end{figure}

\textbf{Ex.~6, uncoded and coded, two users, two chunks}: Consider an uncoded system with two users $u_1$, $u_2$ and two chunks $f_1$ and $f_2$ stored on drives $D_1$ and $D_2$. %Assume arrival rates of $r_{1,1}$, $r_{1,2}$, and $r_{2,1}$ and $r_{2,2}$ for requests of $f_1$ from users $u_1$ and $u_2$ and requests of $f_2$ from users $u_1$ and $u_2$, respectively.  
For the coded system, consider the following drive mapping: $D_1:f_1+f_2$ and $D_2:a_1f_1+a_2f_2$. A
comparison of the \ac{RR} volumes under different traffic patterns and MPR assumptions is presented in
Table~\ref{tab:comp1}. Furthermore, as an example, the conflict graphs for the uncoded and coded
systems under a multicast traffic pattern with $R_{x}(1)=R_{x}(2)=1$ are illustrated in
Fig.~\ref{fig:conflict_graph_ex}.  Note neither graph is a simple clique.  For instance, since
multicast is allowed, in the uncoded conflict graph, there are no edges between vertices reading
from different drives and transmitting to different users, e.g. between $v_{f_{1,1},u_1}$ and
$v_{f_{2,2},u_2}$, and between $v_{f_{1,1},u_2}$ and $v_{f_{2,2},u_1}$. Similarly, for the coded system,
there are no edges between vertices of the form  $v_{f_{i,k},u_j}$ and $v_{f_{i,k'},u_{j'}}$,
$i,k,j\in\{1,2\}$, where $k\neq k'$ and $j\neq j'$.           

\begin{table}[tb]
\caption{ Comparison of \ac{RR} volumes for Ex.~6, 2 chunks, 2 users, under uncoded and coded storage.}
\vspace{-5mm}
\centering
\bigskip
\ra{1.3}
\begin{tabular}{@{}l  r r r@{}}
\toprule
\textbf{Traffic Pattern} & \textbf{Uncoded} & \textbf{Coded} & \textbf{\% $\Delta$}\\
\midrule
Single unicast & 0.0417 & 0.0417 & 0\\
Multiple unicast & 0.1667 & 0.25 & 50\\
Multiple Unicast, with MPR & 0.25 & 0.6667 & 167\\
Broadcast& 0 & 0 & 0\\
Broadcast, with MPR & 0 & 0 & 0\\
Multicast& 0.25 & 0.25 & 0\\
Multicast, with MPR  & 1 & 2.6667 & 167\\
\midrule
\textbf{Average} & & & 54.8 \\
\bottomrule
\end{tabular}
\label{tab:comp1}
\end{table}

%We also compute the maximum achievable rate for a single ingress queue, as well as the maximum
%achievable sum-rate across all ingress queues, of Ex.~6, under uncoded and with coded storage in
%Table~\ref{tab:comprate1}.  The maximum sum rate between the coded and uncoded systems is the same
%in this example, but the maximum rate for any particular ingress queue may be higher, depending on
%the traffic pattern.  

%\begin{table}[tb]
%\caption{ Comparison of the maximum achievable rates for Ex.~6, 2 chunks, 2 users, under uncoded
%  storage and under the coded storage upper bound.}
%\bigskip
%\centering
%\begin{tabular}{@{}l  c c c c c @{}}
%\toprule
%\textbf{Traffic Pattern} & \multicolumn{2}{c}{\textbf{Uncoded}} & \phantom{abc} &\multicolumn{2}{c}{\textbf{Coded}}\\
%\cmidrule{2-3} \cmidrule{5-6}
%& Max & Max & & Max & Max\\
% & Rate & Sum-rate & & Rate & Sum-rate\\
%\midrule
%Single unicast & 1 & 1 && 1 & 1\\
%Multiple unicast & 1 & 2 && 1 & 2\\
%Multiple Unicast, with MPR & 1 & 2 && 2 & 2\\
%Broadcast& 1 & 2 && 1 & 2\\
%Broadcast, with MPR & 1 & 4 && 2 & 4\\
%Multicast& 1 & 2 && 1 & 2\\
%Multicast, with MPR  & 1 & 4 && 2 & 4\\
%\bottomrule
%\end{tabular}
%\label{tab:comprate1}
%\end{table}

\begin{figure}[tb]
  \centering
  \subfigure[Uncoded conflict graph.]  {\includegraphics[width=0.6\linewidth]{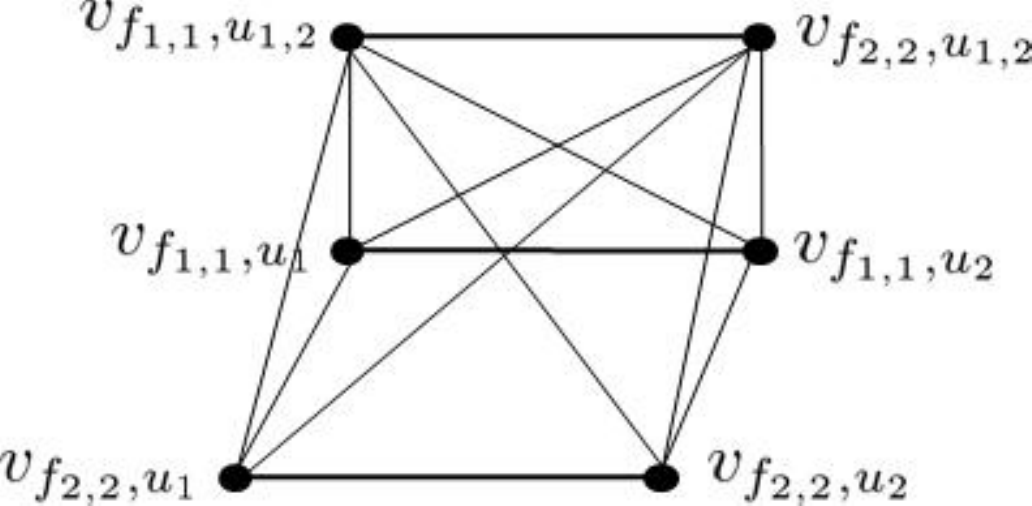}}
\subfigure[Coded conflict graph.] {\includegraphics[width=\linewidth]{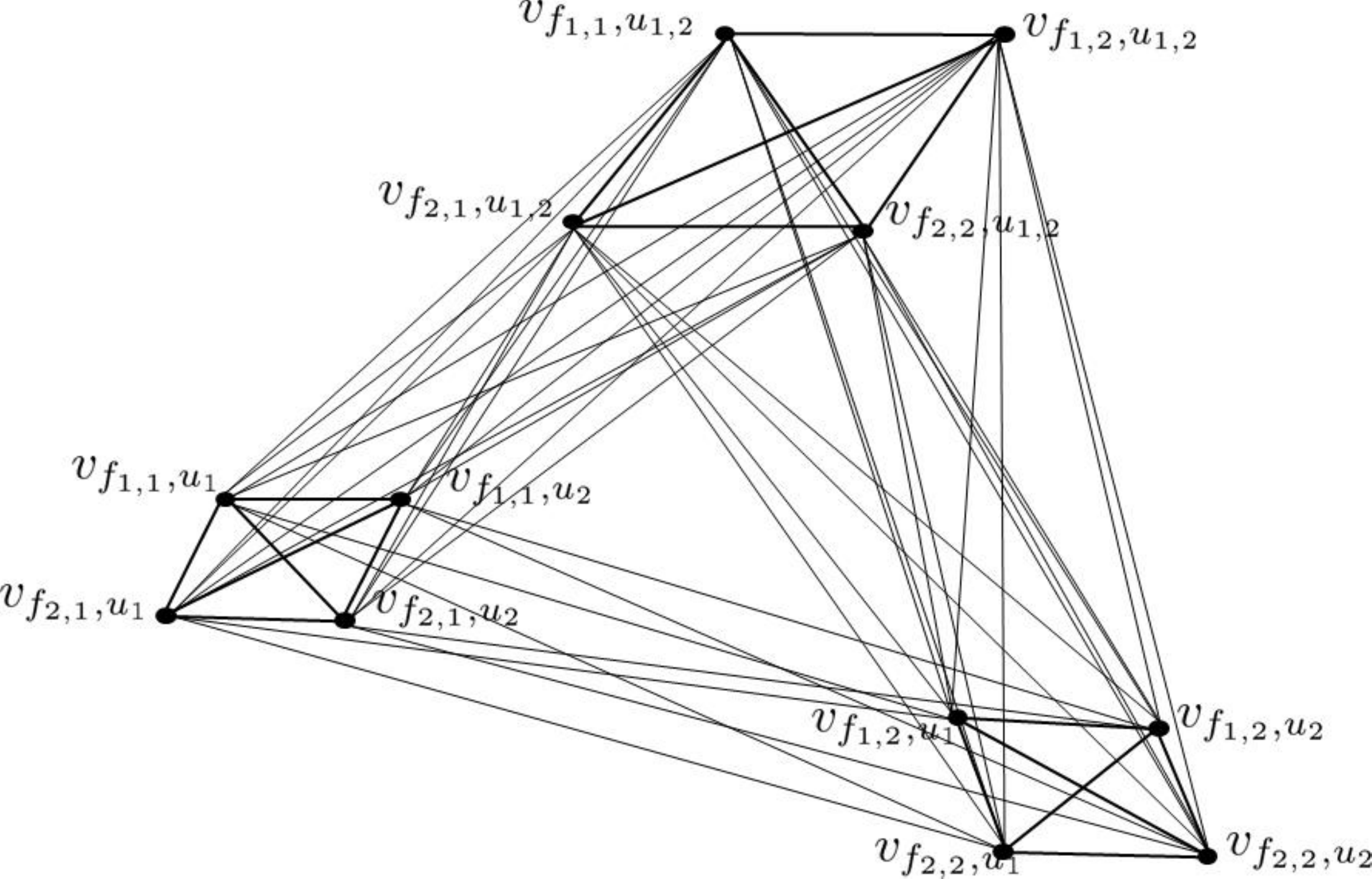}}
  \caption{Conflict graph for uncoded and coded system with two chunks and two users under the multicast setting with $R_{x}(1)=R_{x}(2)=1$. }
  \label{fig:conflict_graph_ex}
\end{figure}

\textbf{Ex.~7, uncoded and coded, two users, three chunks:} Consider an uncoded system
with two users $u_1$, $u_2$ and three chunks $f_1$, $f_2$, and $f_3$ stored on drives $D_1$, $D_2$
and $D_3$.  %Assume arrival rates of $r_{1,1}$, $r_{1,2}$, and $r_{2,1}$, $r_{2,2}$, and $r_{3,1}$ and $r_{3,2}$ for requests of $f_1$ from users $u_1$, $u_2$, requests of $f_2$ from users $u_1$, $u_2$, and requests of $f_3$ from users $u_1$ and $u_2$, respectively.   
For the coded system, consider the following drive mapping: $D_1:f_1+f_2+f_3$,
$D_2:a_1f_1+a_2f_2+a_3f_3$ and $D_3:a_4f_1+a_5f_2+a_6f_3$, a comparison of the \ac{RR}'s volume for
this system under different traffic patterns 
is presented in Table~\ref{tab:comp2}. 

\begin{table}[tb]
\caption{ Comparison of \ac{RR} volumes for Ex.~7, 2 users, 3 chunks with uncoded storage and the
  coded storage upper bound.}
\vspace{-5mm}
\centering
\bigskip
\ra{1.3}
\begin{tabular}{@{}l  r r r@{}}
\toprule
\textbf{Traffic Pattern} & \textbf{Uncoded} & \textbf{Coded} & \textbf{\% $\Delta$}\\
\midrule
Single unicast & 0.0014& 0.0014 & 0\\
Multiple unicast & 0.0236 & 0.0278 & 17.8\\
Multiple unicast, with MPR & 0.1250 & 1.0125 & 710\\
Broadcast & 0 & 0 & 0\\
Broadcast, with MPR & 0 & 0 & 0\\
Multicast & 0.0278 & 0.0278 & 0\\
Multicast, with MPR  & 1 & 8.1 & 710\\
\midrule
\textbf{Average} & & & 205.4 \\
\bottomrule
\end{tabular}
\label{tab:comp2}
\end{table}
%
%We also compute the maximum achievable rate for a single ingress queue, as well as the maximum
%achievable sum-rate across all ingress queues, of Ex.~7, under uncoded and with coded storage in
%Table~\ref{tab:comprate2}.
%
%\begin{table}[tb]
%\caption{ Comparison of the maximum achievable rates for Ex.~7, 3 chunks, 2 users, with uncoded
%  storage and the coded storage upper bound.}
%\bigskip
%\centering
%\begin{tabular}{@{}l  c c  c c c@{}}
%\toprule
%\textbf{Traffic Pattern} & \multicolumn{2}{c}{\textbf{Uncoded}} & \phantom{abc}& \multicolumn{2}{c}{\textbf{Coded}}\\
%\cmidrule{2-3} \cmidrule{5-6}
%& Max & Max && Max & Max\\
%  & Rate & Sum-rate && Rate & Sum-rate\\
%\midrule
%Single unicast & 1 & 1 && 1 & 1\\
%Multiple unicast & 1 & 2 && 1 & 2\\
%Multiple Unicast, with MPR & 1 & 3 && 3 & 3\\
%Broadcast& 1 & 2 && 1 & 2\\
%Broadcast, with MPR & 1 & 6 && 3 & 6\\
%Multicast& 1 & 2 && 1 & 2\\
%Multicast, with MPR  & 1 & 6 && 3 & 6\\
%\bottomrule
%\end{tabular}
%\label{tab:comprate2}
%\end{table}

Results show significant increases in \ac{RR} volume when using coded storage, averaged
across traffic patterns, and as traffic patterns change, the bound shows sizable variability in
coded storage gains.  This encouraging gain is tempered by the fact that it is an upper bound.  Yet,
the size of such potential increases warrants further and more exact coded storage \ac{RR}
analysis.

\vspace{-3mm}
\section{Discussion \& Conclusions}
\label{sec:disc--concl}

Potential areas of future work in this area are as follows.    Although the \ac{QCN} model 
can be used for arbitrary chunk-to-drive mappings, the seeking within drives is assumed to be
deterministic.  This may be a reasonable model if drives are of particular solid state varieties,
but deterministic drive models for \acp{HDD} are problematic.  It would be interesting to extend the
model to allow for internal finite buffers at drives themselves, as well as arbitrary service
distributions.  

The conflict graphs generated by the \ac{QCN} model are a
powerful tool for exploring different traffic patterns.  However, the state space of the conflict
graphs grows quickly in the general case, and it may be useful to explore special cases of storage
and traffic patterns that allow for conflict graphs that may not require particular structures that
allow for exact characterization, such as quasi-line conflict graphs.  

We have developed an upper bound for coded storage rate regions, which is achievable in certain
dynamic coding systems, as well as in certain chunk-to-drive mappings.  It would be useful to
examine the tightness of this bound, as well as to develop a strict lower bound,
or to find an exact solution to the rate region of coded storage.  Potential avenues include
characterizing coded \ac{QCN} models via user's file knowledge matrix $ \mathcal{S} $ directly.  

In conclusion, we have developed a method to map a physical storage system into a simple queued
cross-bar network model, with particular application to high-traffic storage systems.   In doing so,
our method and related analysis tools use existing work in  arbitrary queueing
networks literature, cross-bar switching, as well as conflict graphs.  This allows the \ac{QCN}
method as a natural modeling and analysis tool for systems  with non-regular chunk-to-drive mappings,
replication, as well as for coded storage.   We have used a conflict graph approach, which is a function of
storage system traffic patterns, to exactly characterize the stable set polytope of the conflict
graphs in a number of cases.  We have then computed and compared the rate regions of uncoded storage and
 of the coded storage upper bound, quantifying promising benefits of coded storage over uncoded systems in terms of
\ac{RR} volume.  We have also shown how optimal offline and online scheduling algorithms can be
generated from our model.  

\vspace{-3mm}
%%\IEEEtriggeratref{12}
%\bibliographystyle{IEEEtran} \bibliography{IEEEabrv,StringDefinitions,shared,BiblioCV,ujfRefList}
% Generated by IEEEtran.bst, version: 1.12 (2007/01/11)

\end{document}